\documentclass[10pt, letter, a4paper]{./lib/IEEEtran}
\pagenumbering{gobble}
\usepackage{amsmath}
\usepackage{mathtools}
\usepackage{array}
\usepackage{cite}
\usepackage{algpseudocode,algorithm,algpseudocode}
\usepackage{float}
\usepackage{tabularx}
\usepackage{mathrsfs} 
\usepackage{tikz}
\usetikzlibrary{chains,patterns, arrows,shapes,positioning,arrows,decorations.markings,calc}
\usepackage{pgfplots}
\pgfplotsset{compat=newest}
\usepgfplotslibrary{groupplots}
\usepgfplotslibrary{fillbetween}
\usepackage{gensymb}
\usepackage{xcolor}
\usepackage[shortcuts,acronym]{glossaries}
\makeglossaries
\usepackage{fixmath}
\usepackage{amssymb}
\usepackage{bm}
\usepackage{bbm}
\usepackage{pbox}
\usepackage{multirow}
\usepackage{tabularx}
\pgfplotsset{compat=newest,
	/pgfplots/ybar legend/.style={
		/pgfplots/legend image code/.code={%
			\draw[##1,/tikz/.cd,bar width=3pt,yshift=-0.2em,bar shift=0pt]
			plot coordinates {(0cm,0.8em)};},
	},}

\newcommand\myeqb{\stackrel{\mathclap{\mbox{\scriptsize{($\eta=4$)}}}}{=}}

\usepackage{amsthm}
\theoremstyle{plain}
\newtheorem{theorem}{Theorem}
\newtheorem{lemma}{Lemma}
\newtheorem{approximation}{Approximation}
\newtheorem{proposition}{Proposition}
\newtheorem{remark}{Remark}

\usepackage{caption}

\setlength{\intextsep}{0.1\baselineskip}
\setlength{\floatsep}{0.1\baselineskip}
\setlength{\textfloatsep}{0.1\baselineskip} 

\usepackage{setspace}


\makeatletter
\def\therule{\makebox[\algorithmicindent][l]{\hspace*{.5em}\vrule height .75\baselineskip depth .25\baselineskip}}%

\newtoks\therules
\therules={}
\def\appendto#1#2{\expandafter#1\expandafter{\the#1#2}}
\def\gobblefirst#1{
	#1\expandafter\expandafter\expandafter{\expandafter\@gobble\the#1}}%
\def\LState{\State\unskip\the\therules}
\def\pushindent{\appendto\therules\therule}%
\def\popindent{\gobblefirst\therules}%
\def\printindent{\unskip\the\therules}%
\def\printandpush{\printindent\pushindent}%
\def\popandprint{\popindent\printindent}%

\algdef{SE}[WHILE]{While}{EndWhile}[1]
{\printandpush\algorithmicwhile\ #1\ \algorithmicdo}
{\popandprint\algorithmicend\ \algorithmicwhile}%
\algdef{SE}[FOR]{For}{EndFor}[1]
{\printandpush\algorithmicfor\ #1\ \algorithmicdo}
{\popandprint\algorithmicend\ \algorithmicfor}%
\algdef{S}[FOR]{ForAll}[1]
{\printindent\algorithmicforall\ #1\ \algorithmicdo}%
\algdef{SE}[LOOP]{Loop}{EndLoop}
{\printandpush\algorithmicloop}
{\popandprint\algorithmicend\ \algorithmicloop}%
\algdef{SE}[REPEAT]{Repeat}{Until}
{\printandpush\algorithmicrepeat}[1]
{\popandprint\algorithmicuntil\ #1}%
\algdef{SE}[IF]{If}{EndIf}[1]
{\printandpush\algorithmicif\ #1\ \algorithmicthen}
{\popandprint\algorithmicend\ \algorithmicif}%
\algdef{C}[IF]{IF}{ElsIf}[1]
{\popandprint\pushindent\algorithmicelse\ \algorithmicif\ #1\ \algorithmicthen}%
\algdef{Ce}[ELSE]{IF}{Else}{EndIf}
{\popandprint\pushindent\algorithmicelse}%
\algdef{SE}[PROCEDURE]{Procedure}{EndProcedure}[2]
{\printandpush\algorithmicprocedure\ \textproc{#1}\ifthenelse{\equal{#2}{}}{}{(#2)}}%
{\popandprint\algorithmicend\ \algorithmicprocedure}%
\algdef{SE}[FUNCTION]{Function}{EndFunction}[2]
{\printandpush\algorithmicfunction\ \textproc{#1}\ifthenelse{\equal{#2}{}}{}{(#2)}}%
{\popandprint\algorithmicend\ \algorithmicfunction}%
\makeatother

\algrenewcommand\algorithmicprocedure{\textbf{Input}}
\algrenewcommand\algorithmicreturn{\textbf{Output:}}

\definecolor{Q1color}{RGB}{43, 140, 190}
\definecolor{Q2color}{RGB}{166, 189, 219}
\definecolor{Q3color}{RGB}{236, 231, 242}

\begin{document}
\bstctlcite{IEEEexample:BSTcontrol}
	
%
\title{Prioritized Multi-stream Traffic in Uplink IoT Networks: Spatially Interacting Vacation Queues}

\author{Mustafa~Emara,~\IEEEmembership{Student Member,~IEEE,}
	Hesham~ElSawy,~\IEEEmembership{Senior Member,~IEEE,}
	Gerhard~Bauch,~\IEEEmembership{Fellow,~IEEE}
	\thanks{M. Emara is with the Germany standards R\&D team, Next Generation and Standards, Intel Deutschland GmbH and  the Institute of Communications, Hamburg University of Technology, Hamburg, 21073 Germany (e-mail: mustafa.emara@intel.com)}
	\thanks{H. ElSawy is with the Electrical Engineering Department, King Fahd University of Petroleum and Minerals, 31261 Dhahran, Saudi Arabia (email: hesham.elsawy@kfupm.edu.sa).}
	\thanks{G. Bauch is with the Institute of Communications,
		Hamburg University of Technology, Hamburg, 21073 Germany (email: bauch@tuhh.de).}}
\maketitle
\thispagestyle{empty}
\maketitle
\thispagestyle{empty}

\newacronym{BS}{BS}{base station}
\newacronym{DTMC}{DTMC}{discrete time Markov chain}
\newacronym{EA}{EA}{equal allocation}
\newacronym{5G}{5G}{fifth generation}
\newacronym{FCFS}{FCFS}{first come first serve}
\newacronym{IoT}{IoT}{Internet of Things}
\newacronym{KPI}{KPI}{key performance indicator}
\newacronym{LT}{LT}{Laplace transform}
\newacronym{LTE}{LTE}{long term evolution}
\newacronym{MAM}{MAM}{matrix analytic method}
\newacronym{MTC}{MTC}{machine type communication}
\newacronym{MAC}{MAC}{medium access control}

\newacronym{NB-IoT}{NB-IoT}{narrowband IoT}
\newacronym{PPP}{PPP}{Poisson point processes}
\newacronym{PDF}{PDF}{probability density function}
\newacronym{PMT}{PMT}{prioritized multi-stream traffic}
\newacronym{PA}{PA}{priority agnostic}
\newacronym{PAoI}{PAoI}{peak age of information}
\newacronym{QoS}{QoS}{quality of service}
\newacronym{QCI}{QCI}{QoS class identifier}
\newacronym{QBD}{QBD}{quasi-birth-death}
\newacronym{RAT}{RAT}{radio access technology}
\newacronym{SINR}{SINR}{signal to interference noise ratio}
\newacronym{SIR}{SIR}{signal to interference ratio}
\newacronym{3GPP}{3GPP}{third generation partnership project}
\newacronym{TSN}{TSN}{time senstive networking}
\newacronym{TSP}{TSP}{transmission success probability}
\newacronym{URLLC}{URLLC}{ultra reliable low latency communication}
\newacronym{WA}{WA}{weighted allocation}
\thispagestyle{empty}
\begin{abstract}

Massive Internet of Things (IoT) is foreseen to introduce plethora of applications for a fully connected world. Heterogeneous traffic is envisaged, where packets generated at each device should be differentiated and served according to their priority. This paper develops a novel priority-aware spatiotemporal mathematical model to characterize massive IoT networks with uplink prioritized multi-stream traffic (PMT). Stochastic geometry is utilized to account for the macroscopic network wide mutual interference between the coexisting devices. Discrete time Markov chains (DTMCs) are employed to track the microscopic evolution of packets within each priority queue. To provide a systematic and tractable model, we decompose the prioritized queueing model at each device to a single-queue system with server vacation. To this end, the IoT PMT network is modeled as spatially interacting vacation queues. Dedicated and shared channel priority-aware access strategies are presented. A priority-agnostic scheme is used as a benchmark to highlight the impact of prioritized uplink transmission on the performance of different priorities in terms of transmission probabilities and delay. Additional performance metrics as average number of packets, peak age of information, delay distribution, and Pareto frontiers for different parameters are presented, which give insights on stable operation of uplink IoT networks with PMT.


\end{abstract}
\begin{IEEEkeywords}
Internet of Things, spatiotemporal models, priority queues, vacation queues,  queueing theory, stochastic geometry, grant-free access
\end{IEEEkeywords}
%

\section{Introduction}\label{section:introduction}

The \ac{IoT} paradigm is paving the way to ensure connectivity, networking, and monitoring within different market segments \cite{Palattella2016}. Emerging segments entail, among other examples, smart cities, industrial \ac{IoT}, e-health, and cyber-physical systems, which are all tied with the \ac{IoT} technology advancement \cite{3GPP2018IoT}. Traffic prioritization schemes in IoT are inevitable due to the IoT heterogeneous traffic such as regular traffic (e.g., updates), query reposes (e.g., diagnostics), special measurements, control packets, warnings, and alarms \cite{Ayoub2018, elsawy2020spatial}. On the other hand, system alarms or failures need to be addressed almost immediately. Thus, heterogeneous multi-stream traffic is envisaged, where each traffic stream needs to be differentiated and addressed according to its priority. Such traffic discrepancies impose new challenges on how to properly model the network.

\subsection{Background and Motivation}

The necessity to meet the targeted \ac{QoS} becomes more prominent with \ac{PMT} in mixed-criticality systems. In such systems, the differentiated services and packets ought to be handled appropriately. For cellular systems, the concept of \ac{QCI} was first adopted in \ac{LTE} systems to characterize different services and to ensure that resources are allocated appropriately \cite{3GPP2019}. Each stream (i.e., bearer) has a corresponding \ac{QCI}, which indicates the service type, priority, and packet transmission requirements. Industrial automation is another sector that relies on \ac{PMT}, where guaranteed performance regarding successful packet delivery and latency is an imminent \ac{KPI} \cite{5GACIA2018}. In particular, the IEEE 802.1~Qbv amendment, among its many features, introduces eight different priority classes that are assigned to an incoming traffic stream which define the service requirements of each stream \cite{qbvieee2016}.

In addition to traffic prioritization within the network, massive number of deployed \ac{IoT} devices is foreseen \cite{Al-Fuqaha2015}. Due to the shared characteristic of the wireless channel, mutual interference between the \ac{IoT} devices is imminent. In this context, a key enabler of large scale \ac{IoT} devices is the low cost of deployment, which is realized via distributed and uncoordinated devices. Due to its decentralized nature, grant-free access is adopted in uplink cellular transmissions, where the scheduling complexities imposed by the scheduling grants from the \acp{BS} are alleviated \cite{Bader2017}. To this end, proper understanding and modeling of the prioritized traffic within the massive number of devices is required to i) characterize the performance , ii) understand the impact of different network parameters, iii) highlight common trends in the network’s performance, and iv) provide design insights.

\subsection{Related Work}

Queues with prioritized traffic have attracted wide attention in the queueing theory literature where different metrics (e.g. waiting time distribution and average queue length) are characterized \cite{S.Alfa2015,Takagi1991}. The incorporation of vacations to facilitate the analysis of priority queues is proposed in \cite{Doshi1986, Takagi1991, Machihara1996, HarcholBalter2005, Vuuren2007, Sleptchenko2015}. Nevertheless, the previously mentioned works consider only the interactions within a single queue and disregard the network-wide interaction between the devices \ac{PMT}. In addition, \ac{FCFS} based interactive queues are investigated in \cite{Rao1988, Luo1999} for the collision model which ignores the mutual interference between the devices. Interference characterization within large scale wireless networks has been facilitated in recent years via Stochastic geometry \cite{Andrews2011, ElSawy2013, Elsawy_tutorial}. However, an underlying limiting aspect of the stochastic geometry based models is the full buffer assumption, which assumes that the transmitter has always backlogged packets to be transmitted. Thus, conventional models based on stochastic geometry are oblivious to the temporal traffic evolution and the underlying queueing dynamics at each device.

To account for the temporal domain, recent efforts have integrated queueing theory with stochastic geometry, offering a full spatiotemporal characterization of the large-scale networks \cite{Zhong2017, Gharbieh2017,Gharbieh2018, Yang2019, Chisci2019, Chen2018, Dester2019}. In particular, the work in \cite{Zhong2017} characterizes the delay outage and downlink \ac{SIR} for a heterogeneous cellular network under random, \ac{FCFS} and round-robin scheduling schemes. The authors in \cite{Gharbieh2017} present a spatiotemporal characterization for grant-free uplink transmissions in IoT network, where the performance of power-ramping and back-off transmission strategies are investigated. The work in \cite{Gharbieh2017} is extended and compared to scheduled (i.e., grant-based) uplink transmissions in \cite{Gharbieh2018} and it is  shown that the network performance is highly dependent on the devices densities and traffic load. Analysis for small cell deployment is presented in \cite{Yang2019}, where the authors show the traffic load effect on the coverage probability. For an ad-hoc network, \cite{Chisci2019} presents a fine-grained spatiotemporal characterization for location-dependent \ac{QoS} classes in \ac{IoT} networks. 

Considering prioritized traffic under a spatiotemporal perspective, \cite{Chen2018} studies the delay and throughput in a cognitive radio setup, in which a network of secondary users share the channel with a single primary user. Secondary users are allowed to access the channel with a probability that depends on the primary user's queue length. However, their proposed framework only considers two priority classes. Recently, a framework to characterize an $N$-class prioritized devices is proposed in \cite{Dester2019}, where users randomly share the available channel. However, the model in \cite{Dester2019} is for prioritized devices (not traffic streams) and is only applicable to ad hoc networks. In summary, none of the aforementioned works consider PMT in uplink IoT networks. In addition, we are not aware of any work in the literature that characterizes the spatiotemporal performance, stability frontiers, and delay under different channels allocation strategies.
 
\subsection{Contributions}

When compared to the results presented in the aforementioned works, we provide an analytical framework that entails spatial macroscopic and microscopic scales of \ac{PMT} uplink large scale \ac{IoT} networks. The analysis relies on the joint utilization of stochastic geometry and queueing theory. The spatial macroscopic scale denotes the network-wide interactions arising between the devices in terms of the packet departure probabilities, due to mutual interference between the simultaneously active devices. Tools from stochastic geometry are employed to characterize the network-wide aggregate interference. On the other hand, the spatial microscopic scale, investigated via tools from queueing theory, represents the priority queues temporal dynamics and their interactions. To track the priority class being served at a given time stamp, a two-dimensional Geo/PH/1 \ac{DTMC} is employed for each device, where Geo stands for geometric inter-arrival process and PH stands for the Phase type departure process \cite{S.Alfa2015}. In summary, the main contributions of our are summarized as: 

\begin{itemize}	
	\item Develop a novel and tractable spatiotemporal framework, based on stochastic geometry and queueing theory, that jointly accounts for \ac{PMT} traffic in uplink large scale IoT networks;
	\item Employ a two dimensional  Geo/PH/1 \ac{DTMC} at every \ac{IoT} device  to account for the temporal evolution of queues in response to the \ac{PMT} arrivals and departures;
	\item Integrate the developed \acp{DTMC} within stochastic geometry framework to account for interference-based intrinsic inter-dependency between the macroscopic- and microscopic-scales;
	\item Compare the dedicated and shared allocation strategies with respect to various \acp{KPI}.
\end{itemize}

\subsection{Notation and Organization}
Throughout the paper, we adopt the following notation. Matrices and vectors are represented as upper-case and lower-case boldface letters ($\mathbf{A}$, $\mathbf{a}$), respectively. $[\mathbf{A}]_{i,j}$ denotes the $i$-th row and $j$-th column element of $\mathbf{A}$. The element wise Hadamard product is represented by the operator $\odot$. The function $\mathcal{Q}([\mathbf{A}]_{i,j},b)$ replaces the element in the $i$-th row and $j$-th column of $\mathbf{A}$ with the scalar $b$. The indicator function is denoted as $\mathbbm{1}_{\{a\}}$ which equals 1 if the expression $a$ is true and 0 otherwise. The $(\cdot)^{\text{T}}$ denotes the transpose operation. All ones and zeros vectors of dimension $m$ are represented as $\mathbf{1}_m$ and $\mathbf{0}_m$, respectively. In addition, $\mathbf{I}_m$ and $\mathbf{\mathcal{I}}_m$ denote, respectively, the identity and all ones matrices of dimension $m \times m$. The complement operator is denoted by the over-bar (i.e., $\bar{v} = 1-v$).  The notations $\mathbb{P}\{\cdot\}$ and $\mathbb{E}\{\cdot\}$ denote the probability of an event and its expectation. 

The rest of the paper is organized as follows. Section \ref{sec:system_model} provides the system model and the underlying physical and \ac{MAC} assumptions. The proposed queueing model along with the microscopic intra-device interactions among the priority queues are presented in Section \ref{sec:QT_anaylsis}. Section \ref{sec:sg_analysis} shows the macroscopic inter-device queueing interactions in terms  of mutual interference. Simulation results are presented in Section \ref{sec:simulation_results}. Finally, Section \ref{sec:Conclusion} summarizes the work and draws some conclusions.

\thispagestyle{empty}
\section{System Model}\label{sec:system_model}

\subsection{Spatial and Physical Layer Parameters}
This work studies a cellular uplink network, where the \acp{BS} and \ac{IoT} devices are spatially deployed in $\mathbb{R}^2$ according to two independent homogeneous \acp{PPP}, denoted by $\mathrm{\Psi}$ and $\mathrm{\Phi}$ with intensities $\lambda$ and $\mu$, respectively. Single antennas are employed at all devices and \acp{BS}. Grant-free access is assumed, where the devices attempt their transmissions on a randomly selected channel without a scheduling grant from their serving \ac{BS}. In addition, single connectivity is considered where each device is served by its nearest \ac{BS}. To alleviate congestion, a set of $C$ channels are utilized by the network and a priority-aware access strategy is adopted by the devices to access the available channels.\footnote{This corresponds to the Zadoff-Chu (ZC) codes utilized in LTE and 5G system for the random access channels to request scheduling grants\cite{Bader2017}. Data transmission follows a hybrid protocol with random access used for connection and re-connection, and with grant-based access for data transmission. For mathematical tractability, we consider only orthogonal channels (i.e., ZC codes stemming from the same root).} In this paper, we analyze three channel allocation strategies for priority-aware packet transmission, namely, i) dedicated strategy for each priority class with equal channel allocation, ii) dedicated strategy for each priority class with weighted channel allocation, and iii) shared strategy for all priority classes. For the dedicated strategy, each priority stream has an exclusive set of channels that can only be accessed by the devices to transmit their corresponding priority packets. For the shared strategy, all the channels can be accessed by all devices irrespective of the transmitted packet’s priority. 

An unbounded path-loss propagation model is adopted such that the signal power attenuates at the rate $r^{-\eta}$, where $r$ is the distance and $\eta > 2$ is the path-loss exponent. Small-scale fading is assumed to be multi-path Rayleigh fading, where the signal of interest and interference channel power gains $h$ and $g$, respectively, are exponentially distributed with unit power gain. All channel gains are assumed to be spatially and temporally independent and identical distributed (i.i.d.). Full path-loss channel-inversion power control is adopted, which implies that all devices adjust their transmit powers such that the received uplink average powers at their serving \ac{BS} is equal to a predetermined value $\rho$ \cite{ElSawy2014}. Moreover, a dense deployment of \acp{BS} is assumed, ensuring that every device is able to invert its path-loss almost surely. A packet generated at a given device is successfully decoded at its serving \ac{BS} if the received \ac{SINR} is larger than a predefined threshold $\theta$. In the case of successful decoding, the serving \ac{BS} transmits an ACK so the device can remove this intended packet from its respective queue. In the case of failed decoding, the serving \ac{BS} sends out an NACK and the packet remains at the head of the device's queue, awaiting a new transmission attempt in the next time slot. In this work, we assume error-free and negligible delay for ACK and NACK. Let $p_{i}$ and $\text{TSP}_i$ denote the coverage probability of an $i$-th priority packet and its \ac{TSP} given a transmission, respectively, which are evaluated as
\begin{align}\label{eq:SINR_1}
p_{i} &= \mathbb{P}\{\text{SINR}_i > \theta\}, \;\; \text{TSP}_i = \gamma_i p_{i},
\end{align}
where $\gamma_i$ is transmission probability of the $i$-th priority packet. 

\subsection{MAC Layer Parameters}

The proposed framework considers a synchronized, time slotted, and priority-aware system, in which packets of different priorities are generated at the devices. A \ac{PMT} model is considered such that packets are generated at each priority class independently of other classes. Hence, for a system with $N$ priority classes, batch arrivals up to size $N$ can occur in every time slot. Independent geometric inter-arrival times are assumed between packets belonging to each priority class with parameters $\alpha_i\in[0,1], \; i\in\{1,2,\cdots,N\}$. Through this work, traffic parameterized with lower indices has higher priority. 
\begin{figure}
	\begin{center}
		\ifCLASSOPTIONdraftcls
		\includegraphics[height=1.5in]{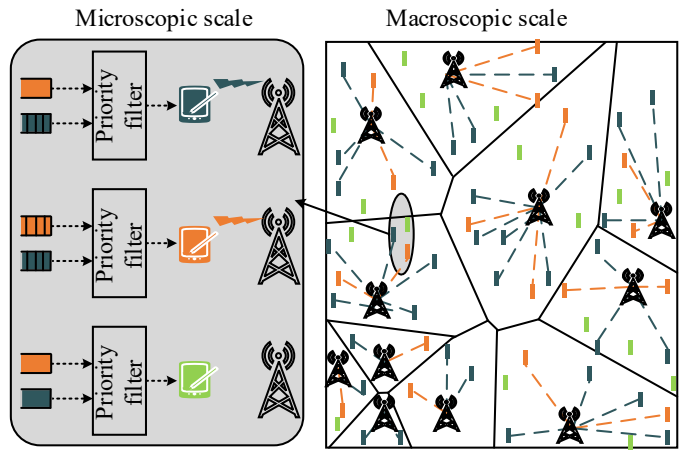}
		\vspace{-0.05in}
		\else
		\includegraphics[width=\columnwidth]{system_model/figures/dep_realization/spatiotemporal_realization.eps}
		\fi 		
		\caption{A snapshot realization of the network with two priority classes. Dark teal orange and green rectangles represent devices with first, second and no packets in their queues, respectively. The Voronoi cells of the \acp{BS} are denoted by the solid black lines while the dashed lines denote the active transmissions between devices and their serving \acp{BS}.}
		\label{fig:spatiotemporal_realization}
	\end{center}
\end{figure}
Generally, we consider that each device has $N$-priority finite queues, each of size $k_i$, that accumulate generated packets according to their priorities. The devices employ a priority-aware transmission strategy that prioritizes the transmission of high priority over lower priority packets. Furthermore, spatially-uniform distributed traffic is considered, whereas the case of location-dependent traffic can be extended by adopting different point processes (e.g. Poisson cluster point process) \cite{Haenggi2012}. It is assumed that arrival and departure of packets only occurs at the start of a time slot. If a high priority packet arrives while a lower priority queue is being addressed, service is interrupted and switched to the higher priority queue. The interrupted service is resumed after the high priority queue is empty. Thus, an inter-class preemptive discipline is considered along with an \ac{FCFS} discipline within each priority queue. In addition, \acp{BS} have no knowledge regarding the status of the devices queues. For the dedicated channel allocation strategies, the device randomly and uniformly selects one of the channels dedicated for the addressed packet priority. For the shared strategy, the device randomly and uniformly selects one of the complete set of channels regardless of the packet priority. In both cases the channel selection process is repeated in each transmission attempt. 

Pictorially, a snapshot realization of the network for two priority classes is shown in Fig.~\ref{fig:spatiotemporal_realization}. The right-hand side of the figure highlights a macroscopic network view and the left-hand side emphasizes the microscopic scale of three links. Due to the adopted preemptive priority discipline, imposed by the priority filter block, packets existing at high priority queues are prioritized for service (i.e., transmission) over packets existing in lower priority queues. If no high priority packets exist, the backlogged lower priority packets are served. In the case of having empty queues, no transmission is attempted and the device does not contribute to the network interference. It is worth noting that the time scale of channel fading, packet generation and transmission is much smaller than that of the spatial dynamics. Each spatial network realization for the adopted \acp{PPP} remains static over sufficiently large number of time slots, while channel fading, queue states, and device activities change from one time slot to another. 	
\thispagestyle{empty}
\section{microscopic queueing theory analysis}\label{sec:QT_anaylsis}

Throughout this section, a novel technique to model the \ac{PMT} is presented. In order to mathematically describe the different priority queues, a conventional way of characterizing the system is based on the following state space \cite[Chapter 9]{S.Alfa2015}. Let $\Delta = \Big\{(z_{1,n},z_{2,n}, \cdots,z_{N,n})| z_{i,n} \in \{0,1,\cdots,k_i\}; i \in\{1,2,\cdots,N\}\Big\}$ denote the state space, where $z_{i,n}$ denotes the number of $i$-th priority packets at the $n$-th time slot. Although tractable for the case of $N=2$, the depicted state space becomes disproportionately complex, when larger values of $N$ are considered \cite{S.Alfa2015}. Thus, a scalable and tractable model for a general number of priority classes is sought. To this end, the proposed vacation-based method to characterize the priority queues is more systematic and tractable for larger number of priority queues, when compared to conventional priority queues works \cite{Takagi1991, Machihara1996, Doshi1986, S.Alfa2015}. 

\subsection{Vacation Model: Motivation \& Description }
Priority queues can be modeled using vacation queues, where low priority queues are forced into a vacation period to allow the high priority queues service \cite{Takagi1991, Machihara1996, Vuuren2007, Sleptchenko2015, Doshi1986}. In other words, the \ac{PMT} is decomposed into a single queue with vacations, where the server becomes alternatively available and unavailable for a given priority class.\footnote{In our model, the server represents the wireless link over which a packet is transmitted. A sever vacation means that the IoT device is utilizing the current uplink time slot to transmit a high priority packet and no lower priority packet can be transmitted within this time slot.} The unavailability of the server, denoted as vacation, results from serving higher priority packets. An illustrative example for the vacation-based model is shown in Fig. \ref{fig:vacation_queues}. Due to its priority, the first priority queue is agnostic to the lower  priority queues dynamics. On the other hand, the second priority queue will be in vacation till the first priority queue is empty. Similarly, the third priority queue will be in vacation till the two higher queues are empty. Conceptually, a given queue will go strictly to vacation if a packet resides in any of the higher priority queues. 

For ease of demonstration, Fig. \ref{fig:vacation_queues}, assumes a hypothetical flawless server (i.e., $ p_i = 1; \forall i=\{1,2,3\}$), thus, ignoring the events of packet transmission failures due to poor wireless channel conditions or high aggregate network-wide interference from mutually active devices.
In that sense, one can consider that the vacation period of the $i$-th priority queue is the summation of the busy periods of the higher queues. In this context, the $i$-th priority queue's vacation period can be modeled via PH type distribution, which tracks the server's status whether it is serving the intended (i.e., $i$-th) priority queue or in vacation serving higher priority queues. By virtue of preemptive prioritization, there is no need to track any of the lower priority queues when analyzing the $i$-th priority class. 
\begin{figure}
	\begin{center}
		\includegraphics[width=\columnwidth]{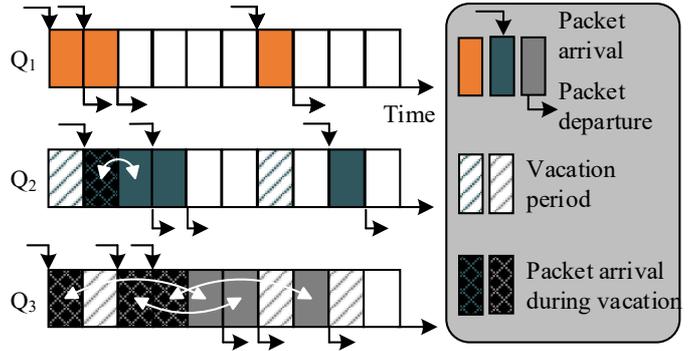}
		\caption{Vacation-based preemptive priority queues for $i=3$. White curved arrows indicates how low priority packets await service till higher priority queues are addressed.}
		\label{fig:vacation_queues}
	\end{center}
\end{figure}

Accordingly, the state space for the proposed vacation-based model $\Delta_{\text{v}}=\big\{\big(\mathcal{S}_i, \mathcal{V}_i \big)|i\in\{1,2,\cdots, N\}\big\}$, where $\mathcal{S}_i \in \{0,1,\cdots, k_i\}$ represents the number of packets at the i-th priority queue and $\mathcal{V}_i=\{(v_1,v_2,\cdots,v_{i-1}) | v_j \in \{0,1,\cdots, k_j\} \; \& \; \exists v_j >0 \} $ captures the vacation states of the server in terms of the number of packets in the higher priority queues. It is worth mentioning that, due to service preemption, any combination of non-empty higher priority queues is considered as a service vacation event for the lower priority queues.Utilizing such categorization of states, Fig. \ref{fig:DTMC_state} presents a two-dimensional Geo/PH/1 Markov chain that is employed at each \ac{IoT} device to track the packet's temporal evolution. The horizontal transitions represent the states of the server, denoted as phases, whether in vacation serving higher priority packets or serving the intended $i$-th priority queue. The vertical transitions represent the number of the packets in the $i$-th priority queue, denoted as levels. By virtue of the vacation-based categorization in $\Delta_{\text{v}}$, Fig. \ref{fig:DTMC_state} represents the transitions between serving the third priority class ($\mathcal{S}_3$ is captured via the left hand states) and being in vacation serving higher priority classes ($\mathcal{V}_3$ is captured via the right hand state and its internal components).

The PH type distribution of the server's vacation is represented via an absorbing Markov chain. In details, when serving higher priority packets, the server will be looping in the transient states of the PH type distribution. Absorbing Markov chains are mathematically described via an initialization vector and a transient matrix. In our case, the initialization vector and transient matrix are denoted as $\mathbf{v}_i$ and $\mathbf{V}_i$, respectively.  The initialization vector $\mathbf{v}_i$  captures all the possible initial states for vacations with their corresponding probabilities. That is, any combination of batch arrivals that include higher priority packets represents a legitimate initial state for the server vacation. The sub-stochastic transient matrix $\mathbf{V}_i$ tracks the server's vacation through tracking the temporal evolution of packets in the higher priority queues. Adopting this vacation-based model allows a systematic and tractable approach to model a network with generic $N$ priorities. 

\begin{figure*} 
	\centering
	\input{QT_anaylsis/figures/DTMC_diagram/DTMC_diagram.tex}
	\caption{Two-dimensional DTMC modeling the vacation-based priority queues for $i=3$. States for the first, second and third priority classes are depicted by red, green and blue circles, respectively. Solid (dashed) lines are all multiplied by $\bar{\alpha}_3$ ($\alpha_3$).}
	\label{fig:DTMC_state}
\end{figure*}

\subsection{Vacation Model Analysis}
Let $m_i= \prod_{m=1}^{i-1} (k_m + 1)$ denote the number of transient states in the PH type distribution of the $i$-th priority queue and let the time-index $n$ be dropped hereafter. For mathematical convenience, we utilize a two level PH type distribution. In the higher level, absorption denotes packet departure from the $i$-th priority queue. At the lower level, absorption implies that the server comes back from vacation and is serving the $i$-th priority packet. Such hierarchy facilitates the transition matrix construction. The utilized higher level PH type distribution is denoted by the initialization vector and transient matrix tuple $(\bm{\beta}_i, \mathbf{S}_i)$, where $\bm{\beta}_i \in \mathbb{R}^{1 \times m_i}$ and  $\mathbf{S}_i \in \mathbb{R}^{m_i \times m_i}$.  In details, $\mathbf{S}_i$ is the sub-stochastic transient matrix that incorporates all the transition probabilities (including whether the server is in vacation or not) until packet departure~\cite{S.Alfa2015}. Starting from any state, the temporal evolution until a single packet departures is captured via the following absorbing Markov chain 
\begin{equation}
\mathbf{T}_i=\left[ \begin{array}{ll}{1} & \mathbf{0} \\ \mathbf{s}_i & \mathbf{S}_i \end{array}\right],
\end{equation}
where $\mathbf{s}_i\in \mathbb{R}^{m_i \times 1}$ is the probability of being absorbed from a given transient phase and is given by $\mathbf{s}_i = \mathbf{1}_{m_i}-\mathbf{S}_i\mathbf{1}_{m_i}$. It is worth noting that $\mathbf{s}_i$ only have a non-zero element in the location corresponding to the serving state of the server, since a packet only departs while the server is not in a vacation. 

Exploiting the mentioned PH type distribution, a scalable formulation that captures the queueing dynamics can be given in the form of a \ac{QBD} process~\cite{G.Kulkarni1999}. Let  $\mathbf{B}_{1,i}, \mathbf{C}_i$ and $\mathbf{B}_{2,i} \in \mathbb{R}^{m_i \times m_i}$ denote the boundary sub-stochastic matrices, \footnote{The stochastic transient boundary matrices, whose rows sum to one, capture the transitions between idle to idle, idle to serving $i$-th priority queue,  idle to vacation (serving $1\leq  j <i$ priority queues) and their complementary directions.} and let $\mathbf{A}_{0,i}, \mathbf{A}_{1,i}$ and $\mathbf{A}_{2,i} \in \mathbb{R}^{m_i \times m_i}$ represent the sub-stochastic matrices that capture the transition down a level, up a level, and in the same level within the \ac{QBD}, respectively. Accordingly, the probability transition matrix $\mathbf{P}_i$ of the $i$-th priority queue is
\begin{equation}\label{eq:QBD}
\mathbf{P}_i=\left[ \begin{array}{llllll} 
\mathbf{B}_{1,i} & \mathbf{C}_{i} &  &  &  &  \\
\mathbf{A}_{2,i} & \mathbf{A}_{1,i} & \mathbf{A}_{0,i} & &  &  \\
 & \mathbf{A}_{2,i} & \mathbf{A}_{1,i} & \mathbf{A}_{0,i} &  &  \\
 &  & \ddots  & \ddots & \ddots &  \\
 &  &  & & \mathbf{A}_{2,i} & \mathbf{B}_{2,i} 
\end{array}\right].
\end{equation}

In details, $\mathbf{B}_{1,i}=\bar{\alpha}_i\mathbf{S}_{0,i}$ captures all transitions from and to the idle state, where $\mathbf{S}_{0,i}$ is the stochastic transient boundary matrix. Similarly, $\mathbf{C}_{i}=\alpha_i\mathbf{S}_{0,i}$ captures the transitions to level 1, that represents an increment of the $i$-th priority packets. The forward transitions sub-matrix  $\mathbf{A}_{0,i}= \alpha_i\mathbf{S}_{i}$ represents the case where a new packet arrives and no packet departs (i.e., vacation state or serving state with transmission failure). The local transitions sub-matrix $\mathbf{A}_{1,i}={\alpha}_i\mathbf{s}_i\bm{\beta}_i + \bar{\alpha}_i\mathbf{S}_{i}$ represents no packet arrival while in transient state or a simultaneous arrival of one packet and a departure of another packet of the same priority. The backward transitions sub-matrix $\mathbf{A}_{2,i}=\bar{\alpha}_i\mathbf{s}_i\bm{\beta}_i$ captures the case of a packet being dispatched, leading to a decrement of the $i$-th queue packets. Finally, the boundary sub-matrix $\mathbf{B}_{2,i} = \alpha_i\mathbf{s}_i\bm{\beta}_i + \mathbf{S}_{i}$ captures the events when the $i$-th queue is full. Note that packets of the $i$-th priority that arrive in this state are lost due to queue overflow. Due to the embedded vacation model, the initialization vector is expressed as $\bm{\beta}_i = [1 \;\; \mathbf{0}_{m_i - 1}]$ has only $1$ at the serving state and zeros otherwise. 

In order to construct the \ac{QBD} via (\ref{eq:QBD}), the stochastic transient matrices $\mathbf{S}_{i}, \mathbf{S}_{0,i}$ are required. We first present preliminary definitions that facilitate the construction of $\mathbf{S}_{0,i}$ and $\mathbf{S}_{i}$. Let $\chi_i$ denote the probability that server starts a vacation while serving the $i$-th priority queue. Due to the adopted preemptive priority discipline, a vacation starts upon the arrival of any of the higher priority packets. Exploiting the independence between all arrival streams, $\chi_i$ is given by
\begin{align}\label{eq:chi}
	\chi_i 	&= 1- \prod_{m=1}^{i-1}\bar{\alpha}_m.
\end{align}

Let $\mathbf{v}_i \in \mathbb{R}^{1 \times m_i -1}$ denotes the vacation initialization vector, which have only non-zero values at the legitimate initial vacation states. The two level PH type distribution used to build the  \ac{QBD} in \eqref{eq:QBD} is constructed through the following proposition. 

\begin{proposition}\label{lem:S_So}
	The stochastic transient matrices of the $i+1$-th priority queue with coverage probability $p_i$, for the boundary $\mathbf{S}_{0,i+1}$ and non-boundary $\mathbf{S}_{i+1}$ states, are evaluated as
	\ifCLASSOPTIONdraftcls
	\begin{align}\label{eq:S_matrix}
	\mathbf{S}_{0,i+1} &= \tilde{\mathbf{S}}_{i+1}, \; \; \mathbf{S}_{i+1} = \tilde{\mathbf{S}}_{i+1}  \odot \mathcal{Q}([\mathbf{\mathcal{I}}_{m_i}]_{1,1},\bar{p}_i), \text{ such that }
	\tilde{\mathbf{S}}_{i+1}=\left[ \begin{array}{ll} 
	\bar{\chi}_i & \chi_i\mathbf{v}_i \\
	\tilde{\mathbf{v}}_{i}  & \mathbf{V}_{i}
	\end{array}\right], \nonumber
	\end{align}
	\else
	\begin{align}\label{eq:S_matrix}
	\mathbf{S}_{0,i+1} &= \tilde{\mathbf{S}}_{i+1}, \; \; \mathbf{S}_{i+1} = \tilde{\mathbf{S}}_{i+1}  \odot \mathcal{Q}([\mathbf{\mathcal{I}}_{m_i}]_{1,1},\bar{p}_i), \nonumber
	\end{align}
	such that $\tilde{\mathbf{S}}_{i+1}$ is defined as 	
	\begin{equation}
	\tilde{\mathbf{S}}_{i+1}=\left[ \begin{array}{ll} 
	\bar{\chi}_i & \chi_i\mathbf{v}_i \\
	\tilde{\mathbf{v}}_{i}  & \mathbf{V}_{i}
	\end{array}\right], \nonumber
	\end{equation}
	\fi 
	 where $\mathbf{V}_{i} = \mathbf{D}_{i}\mathbf{P}_{i}\mathbf{D}_{i}^{\text{T}}$ is the vacation visit matrix, $\chi_i$ is given in (\ref{eq:chi}), $\mathbf{v}_i$ is the vacation initialization vector and $\tilde{\mathbf{v}}_i = \mathbf{1}_{m_i-1}-\mathbf{V}_{i}\mathbf{1}_{m_i-1}$ is the absorption vector. In addition, $\mathbf{D}_{i-1} = [\bm{0}_{m_i-1} \; \; \mathbf{I}_{m_i-1}]$ is the selection matrix .
\end{proposition}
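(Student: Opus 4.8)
The plan is to establish the two displayed identities by a recursive (inductive) argument on the priority index, reading $\tilde{\mathbf{S}}_{i+1}$ as the one-step kernel of the server-status (phase) process of the addressed queue and then grafting in the coverage probability. First I would fix the phase ordering dictated by Fig.~\ref{fig:DTMC_state}: index $1$ is the \emph{serving} phase (all higher-priority queues empty), while the remaining indices are the \emph{vacation} phases (at least one higher-priority packet present). With this ordering the claim splits into two independent tasks: (i) the row-stochastic block form of $\tilde{\mathbf{S}}_{i+1}=\mathbf{S}_{0,i+1}$, and (ii) the Hadamard correction that turns it into the sub-stochastic transient matrix $\mathbf{S}_{i+1}$.

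For (i) I would verify the block form row by row by direct probabilistic accounting. From the serving phase a vacation is triggered exactly when some higher-priority packet arrives; by independence of the arrival streams this occurs with probability $\chi_i$ of \eqref{eq:chi}, the complement $\bar{\chi}_i$ keeps the server serving, and $\mathbf{v}_i$ spreads the $\chi_i$ mass over the legitimate initial vacation states according to which priorities appeared in the batch, giving the first row $[\,\bar{\chi}_i\;\;\chi_i\mathbf{v}_i\,]$. For the vacation rows the crux is the recursion $\mathbf{V}_i=\mathbf{D}_i\mathbf{P}_i\mathbf{D}_i^{\text{T}}$: while the queue is on vacation the server is governed solely by the joint dynamics of the higher-priority queues, which is precisely the \ac{QBD} kernel $\mathbf{P}_i$ of \eqref{eq:QBD} built at the previous recursion level, and the selection matrix $\mathbf{D}_i$ deletes the single global idle state (index $1$) so that only transitions among busy $=$ vacation states survive. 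The return-to-serving entries are then the row defects $\tilde{\mathbf{v}}_i=\mathbf{1}-\mathbf{V}_i\mathbf{1}$, yielding the vacation rows $[\,\tilde{\mathbf{v}}_i\;\;\mathbf{V}_i\,]$. Checking $\bar{\chi}_i+\chi_i\mathbf{v}_i\mathbf{1}=1$ and $\tilde{\mathbf{v}}_i+\mathbf{V}_i\mathbf{1}=\mathbf{1}$ confirms that $\tilde{\mathbf{S}}_{i+1}$ is stochastic, as required for the boundary matrix.

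For (ii) I would argue that a packet of the addressed priority is actually transmitted, and hence eligible to depart, only in the serving phase and only when no higher-priority arrival preempts it, i.e.\ solely along the serving$\to$serving branch. A successful transmission there (probability $p_i$) is an absorption of the higher-level PH-type chain and must be removed from the transient kernel, whereas a failure (probability $\bar{p}_i$) retains the level; every other transition carries no transmission of the addressed packet and is left untouched. Scaling \emph{only} the $(1,1)$ entry of $\tilde{\mathbf{S}}_{i+1}$ by $\bar{p}_i$ is exactly the Hadamard product with $\mathcal{Q}([\mathbf{\mathcal{I}}]_{1,1},\bar{p}_i)$, the all-ones matrix whose $(1,1)$ entry is reset to $\bar{p}_i$. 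As a consistency check I would compute $\mathbf{s}_{i+1}=\mathbf{1}-\mathbf{S}_{i+1}\mathbf{1}$ and confirm its unique nonzero entry sits at the serving phase with value $\bar{\chi}_i p_i$, matching the earlier observation that departures occur only outside vacation.

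The main obstacle I anticipate is rigorously justifying the recursion $\mathbf{V}_i=\mathbf{D}_i\mathbf{P}_i\mathbf{D}_i^{\text{T}}$: this is where the induction does its work, presupposing that $\mathbf{P}_i$ already correctly encodes the higher-priority queues together with their own nested vacations (the induction hypothesis, anchored at the highest-priority queue, which never vacations and hence contributes the single serving phase $m_1=1$), and requiring that the global idle state occupies index $1$ so that $\mathbf{D}_i$ cleanly excises the vacation block. A secondary, more delicate point is pinning down the intra-slot event order, namely arrivals, then preemptive service selection, then the transmission attempt, since it is exactly this order that isolates the $(1,1)$ entry, rather than the whole first row, for the $\bar{p}_i$ scaling; the accompanying index bookkeeping between the two recursion levels is then routine.
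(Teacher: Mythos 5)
Your proof is correct and follows essentially the same route as the paper's: a recursion over the priority classes, anchored at the highest-priority queue, that identifies the vacation phases of queue $i+1$ with the busy period of queues $1,\dots,i$ (i.e., $\mathbf{P}_i$ with its global idle state excised via the selection matrix $\mathbf{D}_i$) and then grafts the departure probability onto the serving phase alone. Your general inductive step --- the row-by-row stochasticity checks and the intra-slot ordering argument isolating the $(1,1)$ entry for the $\bar{p}_i$ scaling --- simply makes explicit what the paper compresses into its base case ($\mathbf{V}_1$ as the birth--death busy period, $\chi_1=\alpha_1$, $\mathbf{v}_1=[1 \;\, \bm{0}_{k_1}]$) followed by the remark that the generic class is ``extended with some mathematical adaptations.''
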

\begin{proof}
	See Appendix \ref{se:Appendix_A}.
\end{proof}

Based on Proposition 1, the vacation states are initialized through the vector $\chi_i\mathbf{v}_i$ (black arrows in Fig. \ref{fig:DTMC_state}), while all the vacation phases are captured by $\mathbf{V}_{i}$ (golden arrows in Fig. \ref{fig:DTMC_state}). Successful transmission of higher priority packets (i.e., end of vacation) is captured by $\tilde{\mathbf{v}}_{i}$ (green arrows in Fig. \ref{fig:DTMC_state}). At this point, the steady state distribution of each priority queue at each device can be evaluated. Let  $\mathbf{A}_i = \mathbf{A}_{0,i} + \mathbf{A}_{1,i} + \mathbf{A}_{2,i}$ and let $\bm{\pi}_i$ represent the unique solution of $\bm{\pi}_i\mathbf{A}_i = \bm{\pi}_i$, with the normalization condition $\bm{\pi}_i\mathbf{1}_{m_i} = 1$. Since finite queues are considered at the devices, one is interested to determine the critical arrival probability after which the probability of having full queues starts to dominate and the queues tend to be always non-empty  \cite{Loynes1962}. Through the rest of the paper, we use the term overflow (non-overflow) region to denote operating beyond (below) such a probability. Mathematically, for the \ac{DTMC} in (\ref{eq:QBD}) to be in the non-overflow region, the following condition must be satisfied
\begin{equation}\label{eq:stability}
	\bm{\pi}_i\mathbf{A}_{2,i} \mathbf{1}_{m_i} > \bm{\pi}_i\mathbf{A}_{0,i} \mathbf{1}_{m_i}.
\end{equation}
The condition in (\ref{eq:stability}) ensures that the departure probability of packets is higher than the arrival probability of packets, which ensures a low overflow probability. Consequently, the overflow probability can be highly reduced by increasing the queue size. 

Let $\mathbf{x}_i = \big[\mathbf{x}_{i,0} \; \mathbf{x}_{i,1} \; \cdots \; \mathbf{x}_{i,k_i}\big]$ be the steady state probability vector where $\mathbf{x}_{i,j}$ incorporates the joint probabilities of having $j$ $i$-th priority packets and all possible combinations of number of packets with priority higher than $i$. In particular, let $\mathbb{P}\{n_1,n_2,n_3,\cdots,n_i\}$ 	denotes the joint probability of having $n_1$ packets at the first priority queue, $n_2$ packets at the second priority queue and so on until $n_i$ packets at the $i$-th priority queue, $\mathbf{x}_{i,j}$ can be represented as 
\ifCLASSOPTIONdraftcls
	\begin{align}\label{ref:JointProb}
	\mathbf{x}_{i,j} = \Big[\mathbb{P}\{(\underbrace{0,\cdots, 0}_\textrm{$i-1$}, j)\}\cdots \mathbb{P}\{(k_1,\cdots, 0, j)\} \cdots
	\mathbb{P}\{(k_1,\cdots, 1, j)\}
	\cdots  \mathbb{P}\{(k_1,\cdots, k_{i-1}, j)\} \Big]. \nonumber
	\end{align}
\else
	\begin{align}\label{ref:JointProb}
	\mathbf{x}_{i,j} &= \Big[\mathbb{P}\{(\underbrace{0,\cdots, 0}_\textrm{$i-1$}, j)\}  \cdots \mathbb{P}\{(k_1,\cdots, 0, j)\} \cdots \nonumber\\ 
	&\mathbb{P}\{(k_1,\cdots, 1, j)\} \cdots  \mathbb{P}\{(k_1,\cdots, k_{i-1}, j)\} \Big].
	\end{align}
\fi 		
In addition, let the scalar $x_{i,j}$ represents the probability of having $j$ packets in the $i$-th priority queue, which is evaluated as ${x}_{i,j} = \mathbf{x}_{i,j}\mathbf{1}_{m_i}$. By virtue of the adopted preemptive discipline, it is clear that the third priority queue is only granted service when all higher priority queues are empty. Thus, the transmission probability $\gamma_i$ can be computed as 
\begin{equation} \label{gggg}
	\gamma_i = \sum_{z_{i}=0}^{k_{i}} \mathbb{P}\{(0,0,\cdots, 0, z_i)\},
\end{equation}
whereas for the first priority queue $\gamma_1 = 1$. Let $r_i = m_i (k_i +1)$ be the number of possible states for the $i$-th queue, then the steady state solution for a stable system is characterized as follows.
\begin{lemma}\label{lem:SSP}
	The steady state distribution for the $i$-th queue with state transition matrix $\mathbf{P}_i$ is 
	\begin{equation}\label{eq:DirectSol_SSP}
	\mathbf{x}_i = \mathbf{1}_{r_i}\big(\mathbf{P}_i - \mathbf{I}_{r_i} + \mathcal{I}_{r_i} \big)^{-1}.
	\end{equation}
	
\end{lemma}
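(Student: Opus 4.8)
The plan is to recognize \eqref{eq:DirectSol_SSP} as a compact closed-form expression for the stationary distribution of the finite-state \ac{DTMC} with transition matrix $\mathbf{P}_i$, and to verify it by checking that the proposed $\mathbf{x}_i$ both satisfies the global balance equations $\mathbf{x}_i\mathbf{P}_i = \mathbf{x}_i$ and is correctly normalized so that $\mathbf{x}_i\mathbf{1}_{r_i} = 1$. Since $\mathbf{P}_i$ in \eqref{eq:QBD} is a finite irreducible (sub)stochastic matrix of dimension $r_i = m_i(k_i+1)$ under the non-overflow condition \eqref{eq:stability}, it admits a unique stationary vector, so verifying these two properties of the formula is sufficient.

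First I would set $\mathbf{M}_i = \mathbf{P}_i - \mathbf{I}_{r_i} + \mathbf{\mathcal{I}}_{r_i}$ and argue that $\mathbf{M}_i$ is invertible: the matrix $\mathbf{P}_i - \mathbf{I}_{r_i}$ is the generator-like matrix whose left null space is one-dimensional (spanned by the stationary vector), and adding the rank-structure contribution $\mathbf{\mathcal{I}}_{r_i}$ (the all-ones matrix, equal to $\mathbf{1}_{r_i}^{\text{T}}\mathbf{1}_{r_i}$) shifts the singular direction away from zero, yielding a nonsingular $\mathbf{M}_i$. The key computation is then to left-multiply the claimed solution by $\mathbf{M}_i$ and recover $\mathbf{1}_{r_i}$; equivalently, I would show $\mathbf{x}_i \mathbf{M}_i = \mathbf{1}_{r_i}$. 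Expanding, $\mathbf{x}_i(\mathbf{P}_i - \mathbf{I}_{r_i}) + \mathbf{x}_i\mathbf{\mathcal{I}}_{r_i} = \mathbf{1}_{r_i}$. Because each row of $\mathbf{P}_i$ sums to one on the non-boundary levels, stochasticity gives $\mathbf{x}_i\mathbf{\mathcal{I}}_{r_i} = (\mathbf{x}_i\mathbf{1}_{r_i}^{\text{T}})\mathbf{1}_{r_i} = (\mathbf{x}_i\mathbf{1}_{r_i})\mathbf{1}_{r_i}$, so imposing normalization $\mathbf{x}_i\mathbf{1}_{r_i}=1$ reduces the identity to $\mathbf{x}_i(\mathbf{P}_i - \mathbf{I}_{r_i}) = \mathbf{0}_{r_i}$, which is precisely the balance equation $\mathbf{x}_i\mathbf{P}_i = \mathbf{x}_i$. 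Thus the single matrix equation $\mathbf{x}_i\mathbf{M}_i = \mathbf{1}_{r_i}$ simultaneously encodes stationarity and normalization, and inverting $\mathbf{M}_i$ delivers \eqref{eq:DirectSol_SSP}.

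I expect the main obstacle to be the careful justification that $\mathbf{M}_i$ is invertible and that the all-ones matrix augmentation does not corrupt the balance equations — in particular one must confirm that the left null space of $\mathbf{P}_i - \mathbf{I}_{r_i}$ is exactly one-dimensional (irreducibility of the chain, which follows from the \ac{QBD} structure together with the non-overflow assumption), and that the rank-one term $\mathbf{\mathcal{I}}_{r_i} = \mathbf{1}_{r_i}^{\text{T}}\mathbf{1}_{r_i}$ acts only along the stationary direction. A clean way to present this is to observe that for any vector $\mathbf{y}$ satisfying $\mathbf{y}\mathbf{M}_i = \mathbf{1}_{r_i}$, right-multiplying by $\mathbf{1}_{r_i}^{\text{T}}$ and using $(\mathbf{P}_i-\mathbf{I}_{r_i})\mathbf{1}_{r_i}^{\text{T}} = \mathbf{0}_{r_i}^{\text{T}}$ (row-stochasticity) together with $\mathbf{\mathcal{I}}_{r_i}\mathbf{1}_{r_i}^{\text{T}} = r_i\mathbf{1}_{r_i}^{\text{T}}$ forces $\mathbf{y}\mathbf{1}_{r_i}^{\text{T}} = 1$, recovering normalization automatically; substituting this back then yields the balance equation, confirming $\mathbf{y} = \mathbf{x}_i$ is the unique stationary distribution. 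This route makes the proof essentially a two-line linear-algebra verification once the non-singularity and irreducibility are in hand.
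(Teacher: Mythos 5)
Your proposal is correct, and it amounts to a self-contained proof of the very step the paper outsources: the paper's own proof merely notes that $\mathbf{x}_i\mathbf{P}_i=\mathbf{x}_i$ together with $\mathbf{x}_i\mathbf{1}_{r_i}=1$ is a linear system of the form $\mathbf{A}\mathbf{x}=\mathbf{b}$ and then invokes \cite[Lemma 1]{Krikidis2012}, which is precisely the statement that this system is equivalent to the single nonsingular equation $\mathbf{x}_i\big(\mathbf{P}_i-\mathbf{I}_{r_i}+\mathcal{I}_{r_i}\big)=\mathbf{1}_{r_i}$. What you verify inline---right-multiplication by the all-ones column vector, using $(\mathbf{P}_i-\mathbf{I}_{r_i})\mathbf{1}_{r_i}=\mathbf{0}_{r_i}$ and $\mathcal{I}_{r_i}\mathbf{1}_{r_i}=r_i\mathbf{1}_{r_i}$, forces normalization; substituting back recovers the balance equation; and invertibility of $\mathbf{M}_i$ follows since a singular $\mathbf{M}_i$ would contradict existence-plus-uniqueness of the normalized stationary vector---is exactly the content of that cited lemma. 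Your route buys self-containment and makes explicit where row-stochasticity and the unichain property enter; the paper's buys brevity. Three small corrections to your write-up: (i) $\mathbf{P}_i$ is stochastic, not substochastic, and every row sums to one including the boundary blocks (e.g., the last block row gives $\mathbf{s}_i\bm{\beta}_i\mathbf{1}_{m_i}+\mathbf{S}_i\mathbf{1}_{m_i}=\mathbf{s}_i+\mathbf{S}_i\mathbf{1}_{m_i}=\mathbf{1}_{m_i}$), so the hedge ``on the non-boundary levels'' is unnecessary; (ii) the identity $\mathbf{x}_i\mathcal{I}_{r_i}=(\mathbf{x}_i\mathbf{1}_{r_i})\mathbf{1}_{r_i}$ is pure algebra coming from the outer-product structure of the all-ones matrix and owes nothing to stochasticity, which you only need in the step where you right-multiply by the ones vector; (iii) uniqueness of the stationary vector follows from the chain being finite and unichain (guaranteed by $0<\alpha_i<1$ and $p_i>0$), not from the non-overflow condition \eqref{eq:stability}, which the paper imposes only so that the steady state it computes is operationally meaningful rather than overflow-dominated.
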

\begin{proof}
		Since we are considering finite \ac{DTMC} based on (\ref{eq:QBD}), the steady state vector $\mathbf{x}_i$ satisfies 
		\begin{equation}
		\mathbf{x}_i\mathbf{P}_i = \mathbf{x}_i,\; \mathbf{x}_i\mathbf{1}_{r_i} = 1,
		\end{equation}
		which is in the form of $\mathbf{A}\mathbf{x}=\mathbf{b}$. Employing \cite[Lemma 1]{Krikidis2012}, the lemma can be proved. 
\end{proof}

\subsection{Alternative Computationally Convenient Solution }

The mathematical complexity required for the inversion in (\ref{eq:DirectSol_SSP}) can be cumbersome, specially for large number of priority classes and large queue sizes $k_i$. Thus, a less-complex and mathematically tractable solution is sought. To this end, the \ac{MAM} is a powerful mathematical tool which is most suited to Markov chains with QBD structure \cite{G.Kulkarni1999},\cite{S.Alfa2015}. Based on the state transition matrix defined in (\ref{eq:QBD}), the following lemma derives the steady state distribution for the $i$-th priority queue. 
\begin{lemma}\label{lem:SSP_MAM}
	The steady state distribution based on the \ac{MAM} for the $i$-th queue is 
	\begin{equation}\label{eq:MAM_SSP}
	\mathbf{x}_{i,j}= \begin{cases}
	\mathbf{\Upsilon}_i\mathbf{A}_{2,i} \big(\mathbf{I}_{m_i} - \mathbf{B}_{1,i} \big)^{-1}, & j = 0,  \\
	\mathbf{\Upsilon}_i,  & j =1, \\
	\mathbf{x}_{i,1}\mathbf{R}_i^{i-1}, &  j > 1,
	\end{cases}
	\end{equation}
	where $\mathbf{\Upsilon}_i=\mathbf{x}_{i,0}\mathbf{C}_{i}\big(\mathbf{I}_{m_i}-\mathbf{A}_{1,i}-\mathbf{R}_i\mathbf{A}_{2,i} \big)^{-1}$ and $\mathbf{R}_i = \mathbf{A}_{0,i}(\mathbf{I}_{m_i}-\mathbf{A}_{1,i}-\mathbf{A}_{0,i} \mathbf{1}_{m_i}\bm{\beta}_i\big)^{-1}$ is the \ac{MAM} matrix. In addition, (\ref{eq:MAM_SSP}) must  satisfy the  normalization $\mathbf{x}_{i,0}\mathbf{1}_{m_i}+ \mathbf{\Upsilon}_i(\mathbf{I}_{m_i} - \mathbf{R}_i)^{-1}\mathbf{1}_{m_i} = 1$.
\end{lemma}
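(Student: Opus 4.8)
The plan is to partition the global balance equation $\mathbf{x}_i\mathbf{P}_i = \mathbf{x}_i$ into its level-blocks induced by the \ac{QBD} structure of (\ref{eq:QBD}) and to solve the resulting block system by the matrix-geometric (Neuts) method. Writing $\mathbf{x}_i = [\mathbf{x}_{i,0}\;\mathbf{x}_{i,1}\;\cdots\;\mathbf{x}_{i,k_i}]$, the interior balance equations for $2 \le j \le k_i-1$ read $\mathbf{x}_{i,j-1}\mathbf{A}_{0,i} + \mathbf{x}_{i,j}\mathbf{A}_{1,i} + \mathbf{x}_{i,j+1}\mathbf{A}_{2,i} = \mathbf{x}_{i,j}$. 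The first step is to posit the matrix-geometric form $\mathbf{x}_{i,j} = \mathbf{x}_{i,1}\mathbf{R}_i^{\,j-1}$ for $j\ge 1$ and to verify that it satisfies every interior equation provided $\mathbf{R}_i$ is the minimal non-negative solution of the quadratic $\mathbf{A}_{0,i} + \mathbf{R}_i\mathbf{A}_{1,i} + \mathbf{R}_i^2\mathbf{A}_{2,i} = \mathbf{R}_i$; this collapses the entire interior to a single rate matrix plus boundary conditions. Positive recurrence, guaranteed by the non-overflow condition (\ref{eq:stability}), is what ensures $\mathrm{sp}(\mathbf{R}_i)<1$ and hence convergence of the geometric tail.

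The crux of the argument is obtaining the closed form of $\mathbf{R}_i$ stated in the lemma. I would exploit the rank-one structure of the down-transition block $\mathbf{A}_{2,i} = \bar{\alpha}_i\mathbf{s}_i\bm{\beta}_i$: because every level-decreasing transition deposits the phase according to the fixed distribution $\bm{\beta}_i$ irrespective of the originating phase, the first-passage ($G$-)matrix must have identical rows, i.e. $\mathbf{G}_i = \mathbf{1}_{m_i}\bm{\beta}_i$. I would make this rigorous algebraically by checking that $\mathbf{1}_{m_i}\bm{\beta}_i$ is idempotent (since $\bm{\beta}_i\mathbf{1}_{m_i}=1$) and solves the $G$-equation $\mathbf{A}_{2,i} + \mathbf{A}_{1,i}\mathbf{G}_i + \mathbf{A}_{0,i}\mathbf{G}_i^2 = \mathbf{G}_i$, using that $\mathbf{A}_i = \mathbf{A}_{0,i}+\mathbf{A}_{1,i}+\mathbf{A}_{2,i}$ is stochastic so that $(\mathbf{A}_{0,i}+\mathbf{A}_{1,i})\mathbf{1}_{m_i} = \mathbf{1}_{m_i} - \bar{\alpha}_i\mathbf{s}_i$; uniqueness of the stochastic solution under positive recurrence then pins down $\mathbf{G}_i = \mathbf{1}_{m_i}\bm{\beta}_i$. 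Substituting into the standard identity $\mathbf{R}_i = \mathbf{A}_{0,i}(\mathbf{I}_{m_i} - \mathbf{A}_{1,i} - \mathbf{A}_{0,i}\mathbf{G}_i)^{-1}$ yields exactly $\mathbf{R}_i = \mathbf{A}_{0,i}(\mathbf{I}_{m_i} - \mathbf{A}_{1,i} - \mathbf{A}_{0,i}\mathbf{1}_{m_i}\bm{\beta}_i)^{-1}$, the inverse existing because $\mathbf{A}_{1,i}+\mathbf{A}_{0,i}\mathbf{G}_i$ is strictly sub-stochastic.

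It remains to resolve the two boundary levels and the normalization. The level-$0$ equation $\mathbf{x}_{i,0}\mathbf{B}_{1,i} + \mathbf{x}_{i,1}\mathbf{A}_{2,i} = \mathbf{x}_{i,0}$ gives $\mathbf{x}_{i,0} = \mathbf{x}_{i,1}\mathbf{A}_{2,i}(\mathbf{I}_{m_i}-\mathbf{B}_{1,i})^{-1}$, with $(\mathbf{I}_{m_i}-\mathbf{B}_{1,i})$ invertible since $\mathbf{B}_{1,i} = \bar{\alpha}_i\mathbf{S}_{0,i}$ is strictly sub-stochastic. The level-$1$ equation $\mathbf{x}_{i,0}\mathbf{C}_i + \mathbf{x}_{i,1}\mathbf{A}_{1,i} + \mathbf{x}_{i,2}\mathbf{A}_{2,i} = \mathbf{x}_{i,1}$, after substituting $\mathbf{x}_{i,2} = \mathbf{x}_{i,1}\mathbf{R}_i$, rearranges to $\mathbf{x}_{i,1} = \mathbf{x}_{i,0}\mathbf{C}_i(\mathbf{I}_{m_i} - \mathbf{A}_{1,i} - \mathbf{R}_i\mathbf{A}_{2,i})^{-1} =: \mathbf{\Upsilon}_i$, which is the claimed identity and simultaneously fixes $\mathbf{x}_{i,1}=\mathbf{\Upsilon}_i$ and $\mathbf{x}_{i,0}=\mathbf{\Upsilon}_i\mathbf{A}_{2,i}(\mathbf{I}_{m_i}-\mathbf{B}_{1,i})^{-1}$. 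Finally, summing $\sum_{j\ge 1}\mathbf{x}_{i,1}\mathbf{R}_i^{\,j-1} = \mathbf{\Upsilon}_i(\mathbf{I}_{m_i}-\mathbf{R}_i)^{-1}$ and adding $\mathbf{x}_{i,0}$ produces the stated normalization $\mathbf{x}_{i,0}\mathbf{1}_{m_i} + \mathbf{\Upsilon}_i(\mathbf{I}_{m_i}-\mathbf{R}_i)^{-1}\mathbf{1}_{m_i}=1$.

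I expect the main obstacle to be twofold. First, the rigorous justification that the rank-one structure forces $\mathbf{G}_i = \mathbf{1}_{m_i}\bm{\beta}_i$ (and hence collapses $\mathbf{R}_i$ to the stated explicit form) is the real content beyond bookkeeping; the algebraic verification of the $G$-equation together with the stochasticity of $\mathbf{A}_i$ is the cleanest route. Second, the matrix-geometric ansatz $\mathbf{x}_{i,j}=\mathbf{x}_{i,1}\mathbf{R}_i^{\,j-1}$ and the geometric-series normalization are exact for the infinite \ac{QBD} but only approximate for the chain truncated at level $k_i$: the top block $\mathbf{B}_{2,i}=\mathbf{A}_{1,i}+\mathbf{A}_{0,i}$ leaves a residual term $\mathbf{x}_{i,1}\mathbf{R}_i^{\,k_i-1}(\mathbf{A}_{0,i}-\mathbf{R}_i\mathbf{A}_{2,i})$ in the level-$k_i$ equation. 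This residual is precisely the overflow contribution, which (\ref{eq:stability}) together with a sufficiently large $k_i$ renders negligible, so I would state and justify the result in the non-overflow regime where the truncation error is controlled.
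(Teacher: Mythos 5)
Your proposal is correct and follows essentially the same route as the paper's proof: the standard matrix-analytic (matrix-geometric) solution of the QBD, with the explicit form of $\mathbf{R}_i$ obtained from the rank-one structure of $\mathbf{A}_{2,i}$, the level-$0$ and level-$1$ boundary balance equations yielding $\mathbf{x}_{i,0}$ and $\mathbf{x}_{i,1}=\mathbf{\Upsilon}_i$, and the geometric-series normalization. Your write-up is in fact tighter than the paper's in three places: you state the correct quadratic $\mathbf{R}_i=\mathbf{A}_{0,i}+\mathbf{R}_i\mathbf{A}_{1,i}+\mathbf{R}_i^{2}\mathbf{A}_{2,i}$ (the paper's displayed quadratic is a typo), you actually justify the collapse $\mathbf{G}_i=\mathbf{1}_{m_i}\bm{\beta}_i$ that the paper merely asserts, and you flag that the pure matrix-geometric form is exact only for the level-unbounded QBD, so for the chain truncated at level $k_i$ it holds up to an overflow residual at the $\mathbf{B}_{2,i}$ boundary --- a caveat the paper's proof silently omits.
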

\begin{proof}
	Based on \cite{G.Kulkarni1999},\cite{S.Alfa2015}, $\mathbf{R}_i$ is the minimal non-negative solution to the quadratic equation $\mathbf{R}_i = \mathbf{A}_{0,i} + \mathbf{A}_{1,i} + \mathbf{A}_{2,i}$. Let $\mathbf{x}_{i,0}$ and $\mathbf{x}_{i,1}$ be the solution to 
	\begin{equation}\label{eq:xox1_MAM}
	\Big[\mathbf{x}_{i,0} \;\; \mathbf{x}_{i,1}\Big] = \Big[\mathbf{x}_{i,0} \;\; \mathbf{x}_{i,1}\Big] \left[ \begin{array}{ll} \mathbf{B}_{1,i} & \mathbf{C}_{i} \\ \mathbf{A}_{2,i} & \mathbf{A}_{1,i}+\mathbf{R}_i\mathbf{A}_{2,i} \end{array}\right].
	\end{equation}
	The employed \ac{DTMC}  has an advantageous feature that can be exploited, since $\mathbf{A}_{2,i}$ is a rank one matrix, which simplifies $\mathbf{R}_i$ to $\mathbf{R}_i = \alpha_i\mathbf{S}_{i}(\mathbf{I}_{m_i}-\bar{\alpha_i}\mathbf{s}_i\bm{\beta}_i - \bar{\alpha_i}\mathbf{S}_{i}-\alpha_i\mathbf{S}_{i}\mathbf{1}_{m_i}\bm{\beta}_i).$ Given that (\ref{eq:stability}) is satisfied, $\mathbf{R}_i$ has a spectral radius less than one \cite{S.Alfa2015}. The solution to (\ref{eq:xox1_MAM}) is
	\ifCLASSOPTIONdraftcls
	\begin{align}
	\mathbf{x}_{i,0}&= \alpha_i \mathbf{x}_{i,0} \mathbf{S}_{0}(\mathbf{I}_{m_i} - \alpha_i\mathbf{s}_i \bm{\beta}_i-\bar{\alpha}_i\mathbf{S}_i - \mathbf{R}_i \bar{\alpha}_i \mathbf{s}_i \bm{\beta}_i)^{-1} \bar{\alpha}_i \mathbf{s}_i\bm{\beta}_i(\mathbf{I}_{m_i}-\bar{\alpha}_i \mathbf{S}_{0,i})^{-1}, 
	\end{align}
	\else
	\begin{align}
	\mathbf{x}_{i,0}&= \alpha_i \mathbf{x}_{i,0} \mathbf{S}_{0}(\mathbf{I}_{m_i} - \alpha_i\mathbf{s}_i \bm{\beta}_i-\bar{\alpha}_i\mathbf{S}_i - \mathbf{R}_i \bar{\alpha}_i \mathbf{s}_i \bm{\beta}_i)^{-1} \\ \nonumber
	&\bar{\alpha}_i \mathbf{s}_i\bm{\beta}_i(\mathbf{I}_{m_i}-\bar{\alpha}_i \mathbf{S}_{0,i})^{-1}, 
	\end{align}
	\fi 	
	with the normalization $\mathbf{x}_{i,0}\mathbf{1}_{m_i} + \alpha_i \mathbf{x}_{i,0} \mathbf{S}_{0}(\mathbf{I}_{m_i} - \alpha_i\mathbf{s}_i \bm{\beta}_i-\bar{\alpha}_i\mathbf{S}_i - \mathbf{R}_i \bar{\alpha}_i \mathbf{s}_i \bm{\beta}_i)^{-1} (\mathbf{I}_{m_i} - \mathbf{R}_i)^{-1}\mathbf{1}_{m_i} = 1$. Finally, $\mathbf{x}_{i,1}$ is obtained through solving (\ref{eq:xox1_MAM}) and $\mathbf{x}_{i,j} = \mathbf{x}_{i,1}\mathbf{R}^{-1}$. Substituting the component matrices, the lemma is reached. 
\end{proof}

\subsection{Vacation Model Verification}
As verification, the proposed vacation-based preemptive model is compared against the conventional method presented in \cite[Chapter 9]{S.Alfa2015} for the case of $N=2$. Assuming a hypothetical fixed service probability $p_i$, Fig. 4 compares the conventional method with the proposed one. It is observed that the vacation-based model exactly characterizes the priority queues evolution while offering a computationally convenient, tractable, and scalable model for larger number of priority classes, whereas for higher values of $N$, the conventional method becomes highly complex.

\begin{figure} 
	\centering
	\ifCLASSOPTIONdraftcls
	\begin{tikzpicture}[scale=0.9]
\begin{groupplot}[group style={
	group name=myplot,
	group size= 2 by 1,  horizontal sep=2cm}, height=1.4in,width=2in]
\nextgroupplot[title={{(a)}},
scale only axis,
xmin=0,
xmax=6,
xlabel style={font=\color{white!15!black}},
xlabel={$j$},
xtick={0,1,2,3,4,5,6}, 
ymin=0,
ymax=1,
ylabel style={font=\color{white!15!black}},
ylabel={$x_{1,j}$},
axis background/.style={fill=white},
xmajorgrids,
ymajorgrids]
\addplot[ycomb, color=red, dashed, line width=1.0pt, mark=otimes, mark options={solid, fill=red, red}] table[row sep=crcr] {%
	0	0.800000301068341\\
	1	0.177777844681854\\
	2	0.0197530938535393\\
	3	0.00219478820594881\\
	4	0.000243865356216534\\
	5	2.7096150690726e-05\\
	6	3.01068341008067e-06\\
};

\addplot[ycomb, color=blue, dotted, line width=2.0pt, mark=x, mark options={solid, blue}] table[row sep=crcr] {%
	0	0.800000301068341\\
	1	0.177777844681854\\
	2	0.0197530938535393\\
	3	0.00219478820594881\\
	4	0.000243865356216534\\
	5	2.70961506907261e-05\\
	6	3.01068341008067e-06\\
};

\nextgroupplot[title={{\smash{(b)}}},
scale only axis,
scale only axis,
xmin=0,
xmax=5,
xlabel style={font=\color{white!15!black}},
xlabel={$j$},
xtick={0,1,2,3,4,5}, 
ylabel={$x_{2,j}$},
ymin=0,
ymax=1,
axis background/.style={fill=white},
xmajorgrids,
ymajorgrids]
\addplot[ycomb, color=red, dashed, line width=2.0pt, mark=*, mark options={solid, fill=red, red}] table[row sep=crcr] {%
	0	0.025105333738797\\
	1	0.0619812817198168\\
	2	0.0941059652153034\\
	3	0.146851464361824\\
	4	0.230227496730046\\
	5	0.441728458234214\\
};\addlegendentry{\cite{Emara2020}}

\addplot[ycomb, color=blue, dotted, line width=2.0pt, mark=x, mark options={solid, fill=blue, blue}] table[row sep=crcr] {%
	0	0.025105333738797\\
	1	0.0619812817198169\\
	2	0.0941059652153035\\
	3	0.146851464361824\\
	4	0.230227496730046\\
	5	0.441728458234213\\
};
\addlegendentry{Proposed}

\end{groupplot}

\end{tikzpicture}
	\vspace{-0.15in}
	\else
	\begin{tikzpicture}[scale=0.9]
\begin{groupplot}[group style={
	group name=myplot,
	group size= 2 by 1,  horizontal sep=1.5cm}, height=1.8in,width=1.3in]
\nextgroupplot[title={{(a)}},
scale only axis,
xmin=0,
xmax=6,
xlabel style={font=\color{white!15!black}},
xlabel={$j$},
xtick={0,1,2,3,4,5,6}, 
ymin=0,
ymax=1,
ylabel style={font=\color{white!15!black}},
ylabel={$x_{1,j}$},
axis background/.style={fill=white},
xmajorgrids,
ymajorgrids]
\addplot[ycomb, color=red, dashed, line width=1.0pt, mark=otimes, mark options={solid, fill=red, red}] table[row sep=crcr] {%
	0	0.800000301068341\\
	1	0.177777844681854\\
	2	0.0197530938535393\\
	3	0.00219478820594881\\
	4	0.000243865356216534\\
	5	2.7096150690726e-05\\
	6	3.01068341008067e-06\\
};

\addplot[ycomb, color=black, dotted, line width=1.0pt, mark=x, mark options={solid, black}] table[row sep=crcr] {%
	0	0.800000301068341\\
	1	0.177777844681854\\
	2	0.0197530938535393\\
	3	0.00219478820594881\\
	4	0.000243865356216534\\
	5	2.70961506907261e-05\\
	6	3.01068341008067e-06\\
};

\nextgroupplot[title={{\smash{(b)}}},
scale only axis,
scale only axis,
xmin=0,
xmax=5,
xlabel style={font=\color{white!15!black}},
xlabel={$j$},
xtick={0,1,2,3,4,5}, 
ylabel={$x_{2,j}$},
ymin=0,
ymax=1,
axis background/.style={fill=white},
xmajorgrids,
ymajorgrids]
\addplot[ycomb, color=red, dashed, line width=1.0pt, mark=*, mark options={solid, fill=red, red}] table[row sep=crcr] {%
	0	0.025105333738797\\
	1	0.0619812817198168\\
	2	0.0941059652153034\\
	3	0.146851464361824\\
	4	0.230227496730046\\
	5	0.441728458234214\\
};\addlegendentry{\cite[Ch. 9]{S.Alfa2015}}

\addplot[ycomb, color=black, dotted, line width=1.0pt, mark=x, mark options={solid, fill=black, black}] table[row sep=crcr] {%
	0	0.025105333738797\\
	1	0.0619812817198169\\
	2	0.0941059652153035\\
	3	0.146851464361824\\
	4	0.230227496730046\\
	5	0.441728458234213\\
};
\addlegendentry{Proposed}

\end{groupplot}

\end{tikzpicture}
	\fi 		
	\caption{Steady state probabilities for (a) first (b) second priority class for $k_1 = 6, k_2 = 5, \alpha_1 = 0.1, \alpha_2 = 0.5 \& p_1 = p_2 = 0.5$.}
	\label{fig:VacationVerif}
\end{figure}

It is clear that in order to compute the steady state distributions $x_{i,j}$ of the $i$-th queue, one need to compute $p_i$. Such inter-dependency highlights the interaction between the  microscopic and macroscopic scales in the network. In what follows, we present the framework adopted to characterize $p_i$ based on stochastic geometry analysis. 

\thispagestyle{empty}
\section{Macroscopic Large Scale Analysis}\label{sec:sg_analysis}

Based on \eqref{eq:SINR_1}, it is clear that $p_i$ is a function of the aggregate network mutual interference induced by the macroscopic interactions between the devices.  This section utilizes stochastic geometry to delve into the network-wide interactions between devices and characterizes the coverage probability defined in \eqref{eq:SINR_1}. Before proceeding further, we state two commonly used and core approximations that are utilized in this work for tractability and mathematical convenience. 

\begin{approximation} \label{approx1}
(i) The spatial correlations between adjacent Voronoi cell areas are ignored. (ii) All devices in the network are assumed to perform (i.e., in terms of coverage probability) as the typical device located at the origin. Both approximations are validated in Section~\ref{sec:simulation_results} against independent Monte Carlo simulations.
\end{approximation} 

\begin{remark} 
(i) Implies that all devices will have independent and identically distributed transmit powers to invert their path-loss to the serving BS.  Such assumption is commonly used and verified in the literature \cite{Lee2014, ElSawy2014}. (ii) For static networks, the coverage probability is location dependent, which is captured via the meta distribution~\cite{Haenggi_meta} and can be incorporated to the spatiotemporal analysis as in~\cite{Chisci2019, Yang2019}. However, it is shown \cite{Gharbieh2017, Gharbieh2018, Elsawy_meta, Haenggi_meta2} that such location dependence diminishes with path-loss inversion and random channel selection.
\end{remark}

Exploiting Approximation~\ref{approx1}(i) and \ref{approx1}(ii), the coverage probability of an $i$-th priority packet transmitted from a typical device located at the origin can be further expressed as
\begin{equation}\label{eq:TSP_1}
	p_i = \mathbb{P}\{\text{SINR}_i > \theta\} = \mathbb{P}\Big\{\frac{\rho h_o}{I_i + \sigma^2} \ge \theta\Big\},
\end{equation}
where $h_o$ is the channel gain between the device and its serving \ac{BS}, $\sigma^2$ is the noise power, and $I_i$ is the aggregate interference seen by an $i$-th priority packet. In details, $I_i$ is expressed as  
\begin{equation}
I_i = \sum_{y_j \in \mathrm{\Phi}\setminus y_o} \mathbbm{1}_{\{a_{i,j}\}} P_j g_j ||y_j - z_o||^{-\eta},
\end{equation}
where $y_j$ is the location of an interfering device (all active devices will be interfering except  the typical device $ \mathrm{\Phi}\setminus y_o$), $a_{i,j}$ is the event that the device located at $y_j$ is transmitting on the same channel as the typical device, $P_j$ is its transmit power, $g_j$ is the channel power gain between the interfering device and the serving BS,  $||.||$ is the Euclidean norm, and $z_o$ is the typical device's serving \ac{BS}'s location.

\begin{remark}
It is worth noting that $p_i$ across different priority classes will only be different for the dedicated channel allocation, where the channel selection is dependent on the packet priority. Hence, a device sending a packet of priority $i$ may only experience interference form devices transmitting packets of the  same priority. However, for the case of shared channel allocation, the coverage probability is agnostic the packets priorities. 
\end{remark}

Due to the assumed exponential distribution of $h_o$, the channel inversion power control and the definition of the \ac{LT}, (\ref{eq:TSP_1}) can be expressed as 
\begin{equation}\label{eq:TSP_2}
p_i = \text{exp}\Big \{ -\frac{\sigma^2\theta}{\rho} \Big\} \mathcal{L}_{I_i}\Big (\frac{\theta}{\rho} \Big),
\end{equation}
where $\mathcal{L}_{I_i}(.)$ is the \ac{LT} of the aggregate interference $I_{i}$. One can observe from (\ref{eq:TSP_2}) the effect of fading, power control, and \ac{SINR} threshold on the achieved transmission probabilities, which in return affects the queues temporal evolution. Thus, coupling the queues departure probabilities and the aggregate interference in the network. In the remaining of this section, we characterize the coverage probability for three different channel allocation strategies.

\subsection{Dedicated allocation} 
This scheme considers an orthogonal allocation among the active queues based on their priority. The interfering sources to an active transmission of the $i$-th priority queue can only be from the set of all active devices having packets to be transmitted in their $i$-th priority queue. The coverage probability of an $i$-th priority packet under the dedicated allocation is derived as follows.

\begin{theorem}\label{lem:dedicated}
	The coverage probability ${p}_i$ of a packet belonging to the $i$-th priority class under the dedicated allocation strategy is given by
	\begin{align}\label{eq:dedicated_2}
	{p}_i &\approx \; \;\frac{\exp \left\{-\frac{\sigma^{2} \theta}{\rho}-\frac{2 \theta \zeta_i \kappa_{i,\text{m}}}{(\eta-2)} {}_2F_{1}(1,1-2 / \eta, 2-2 / \eta,-\theta)\right\}}{\left(1+\frac{\theta \zeta_i \kappa_{i,\text{m}}}{(1+\theta) c}\right)^{c}}, \nonumber \\
	&\myeqb \; \;\frac{\exp \left\{-\frac{\sigma^{2} \theta}{\rho}- \zeta_i \kappa_{i,\text{m}} \sqrt{\theta} {\arctan}\Big(\sqrt{\theta}\Big)\right\}}{\left(1+\frac{\theta \zeta_i \kappa_{i,\text{m}}}{(1+\theta) c}\right)^{c}},
	\end{align}
	where $\zeta_i = \sum_{z_i=1}^{k_{i}} \mathbb{P}\{(0,0,\cdots, 0,z_i)\}$ is the joint probability of having no packets with priority higher than $i$ and at least a packet with priority $i$, $\kappa_{i,\text{m}} = \frac{\mu}{\lambda C_{i,\text{m}}}$ is the average number of devices per \ac{BS} per channel, where $\text{m}\in\{\text{EA, WA}\}$ indicates \ac{EA} or \ac{WA} dedication strategy. ${}_2F_{1}(\cdot)$ is the Gaussian hypergeometric function that is defined as $_2F_{1}(a,b,u;z) = \sum_{k=0}^{\infty}\frac{(a)_k(b)_kz^k}{(u)_kk!}$ and $c = 3.575$. The approximation is due to Approximation~\ref{approx1}(i) and the employed approximate PDF of the PPP Voronoi cell area in $\mathbb{R}^2$ as shown in (\ref{eq:laplace_2}).
\end{theorem}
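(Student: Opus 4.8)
The plan is to build on the reduction already recorded in \eqref{eq:TSP_2}, namely $p_i = \exp\{-\sigma^2\theta/\rho\}\,\mathcal{L}_{I_i}(\theta/\rho)$, so that the whole theorem collapses to evaluating the Laplace transform of the aggregate interference $I_i$ at $s=\theta/\rho$. The first step is to pin down the interfering set. Under the dedicated strategy a potential interferer contributes only if it simultaneously (i) is active with an $i$-th priority packet at the head of its queue, which by the steady-state analysis of Section~\ref{sec:QT_anaylsis} occurs with probability $\zeta_i=\sum_{z_i=1}^{k_i}\mathbb{P}\{(0,\dots,0,z_i)\}$, and (ii) has independently selected the same channel, with probability $1/C_{i,\text{m}}$. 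Invoking Approximation~\ref{approx1}(ii), every device is statistically identical to the typical one, so independent thinning of $\Phi$ yields an (approximate) \ac{PPP} of interferers whose per-\ac{BS}, per-channel loading is precisely the dimensionless $\kappa_{i,\text{m}}=\mu/(\lambda C_{i,\text{m}})$ appearing in the statement; this is also what lets the $\pi\lambda$ normalization of the cell area cancel against the interferer intensity $\propto\mu$.

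Next I would condition on the interferer positions $\{y_j\}$ and their own link distances $\{r_j\}$ and average over the unit-mean exponential interfering gains $g_j$. Since full channel inversion fixes each transmit power at $\rho r_j^\eta$, each interferer contributes a factor $1/(1+s\rho r_j^\eta\|y_j-z_o\|^{-\eta})$, and the expected product is turned by the probability generating functional of the interferer \ac{PPP} into the exponential of a single spatial integral. The delicate feature of the uplink is the coupling between an interferer's link distance $r_j$ and its distance $\|y_j-z_o\|$ to the typical \ac{BS}: because each device associates with its nearest \ac{BS}, one always has $r_j\le\|y_j-z_o\|$, and the two quantities are correlated through the Voronoi tessellation. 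Approximation~\ref{approx1}(i) is exactly what decouples them, permitting the $r_j$ to be treated as i.i.d.\ and independent of the positions, subject only to the near/far constraint.

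The core computation, and where I expect the main obstacle to lie, is then the evaluation of this coupled double integral after inserting the approximate Gamma \ac{PDF} of the Poisson--Voronoi cell recorded in \eqref{eq:laplace_2}, whose shape parameter is $c=3.575$. Writing the exponent as $\exp\{\cdots-c\ln(1+\cdots)\}=\exp\{\cdots\}/(1+\cdots)^c$, the integral separates into two analytically distinct pieces. The radial integration of the Poisson field over the admissible region $\|y\|\ge r$, with $s=\theta/\rho$ and distances normalized by the link distance, reduces to a beta-type integral that evaluates in closed form to $\tfrac{2\theta\zeta_i\kappa_{i,\text{m}}}{\eta-2}\,{}_2F_1(1,1-2/\eta,2-2/\eta,-\theta)$, producing the numerator exponent; integrating the same quantity against the Gamma-shaped link-distance density is what generates the Gamma Laplace factor $(1+\theta\zeta_i\kappa_{i,\text{m}}/((1+\theta)c))^{-c}$ and hence the denominator. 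The hard part is orchestrating these two reductions so that the approximate cell-area density, the constraint $r\le\|y\|$, and the loading $\kappa_{i,\text{m}}$ all combine consistently into the stated closed form.

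Finally I would specialize to $\eta=4$. Substituting $\eta=4$ turns the prefactor $\tfrac{2\theta}{\eta-2}$ into $\theta$ and the hypergeometric parameters into $(1,1/2,3/2)$; applying the elementary identity ${}_2F_1(1,1/2,3/2,-\theta)=\arctan(\sqrt{\theta})/\sqrt{\theta}$ then collapses the exponent to $-\zeta_i\kappa_{i,\text{m}}\sqrt{\theta}\arctan(\sqrt{\theta})$, which is exactly the second line of \eqref{eq:dedicated_2} and completes the argument.
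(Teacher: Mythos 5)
There is a genuine gap in your argument: you never account for \emph{intra-cell} interference, and you misattribute the denominator of \eqref{eq:dedicated_2} as a result. In this grant-free uplink model, several devices inside the \emph{same} Voronoi cell as the typical device can pick the same channel; because each of them inverts its path loss toward the very \ac{BS} that serves the typical device, each such interferer arrives at that \ac{BS} with power exactly $\rho g$, contributing a factor $1/(1+\theta)$ to the \ac{LT}. The number of these in-cell co-channel interferers is what the Gamma approximation with shape $c=3.575$ is used for: it is the approximate \ac{PDF} of the Poisson--Voronoi \emph{cell area} (equivalently, it yields the distribution of the number of in-cell neighbors in \eqref{eq:laplace_2}), not a link-distance density --- link distances under nearest-\ac{BS} association are Rayleigh, and averaging an out-of-cell exponent over them would never produce a $\left(1+x/c\right)^{-c}$ factor. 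Evaluating the generating function of the in-cell interferer count at $1/(1+\theta)$ is precisely what gives the denominator $\left(1+\frac{\theta\zeta_i\kappa_{i,\text{m}}}{(1+\theta)c}\right)^{-c}$. The paper's proof therefore factorizes the interference first, writing $p_i = e^{-\sigma^2\theta/\rho}\,\mathcal{L}_{I_{\text{out}}}\!\left(\tfrac{\theta}{\rho}\right)\mathcal{L}_{I_{\text{in}}}\!\left(\tfrac{\theta}{\rho}\right)$ as in \eqref{eq:laplace_0}, obtains the exponential/hypergeometric numerator from the inter-cell term \eqref{eq:laplace_1} (via the cited uplink power-control result), and the denominator from the intra-cell term \eqref{eq:laplace_2}.

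Your single-PGFL route treats every interferer as an out-of-cell one (subject to the near/far constraint $r_j\le\|y_j-z_o\|$), so the two "analytically distinct pieces" you describe are really two stages of integrating the \emph{same} out-of-cell population; that computation can only yield an exponential factor and cannot generate the negative-binomial-type denominator. The parts of your proposal that are sound --- the thinning by $\zeta_i/C_{i,\text{m}}$ giving $\kappa_{i,\text{m}}\zeta_i$, the decoupling role of Approximation~\ref{approx1}, the beta-type radial integral yielding ${}_2F_1(1,1-2/\eta,2-2/\eta,-\theta)$, and the $\eta=4$ identity ${}_2F_1(1,1/2,3/2,-\theta)=\arctan(\sqrt{\theta})/\sqrt{\theta}$ --- all pertain to the inter-cell factor only. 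To close the gap you must separate $I_i = I_{\text{in}}+I_{\text{out}}$ at the outset and compute the in-cell \ac{LT} from the cell-load distribution, as the paper does.
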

\begin{proof}
	See Appendix \ref{se:Appendix_B}.
\end{proof}
The parameter $\kappa_i \zeta_i$ represents the portion of devices attempting a transmission of an $i$-th priority packet. Thus, interfering on the typical device that is attempting the transmission of its own $i$-th priority packet. Moreover, $\kappa_i$ is affected by the number of channels assigned to each priority class. Through this work, we investigate two dedicated channel allocation strategies; namely, \ac{EA} and \ac{WA}. The former equally splits the total available channels among the existing priority classes, whereas the latter considers an allocation of channels that is dependent on that given priority class arrival probability. Mathematically, the number of allocated channels for the equal and weighted schemes are expressed as
\begin{equation}
C_{i, \text{EA}} = \frac{C}{N},\;\; C_{i, \text{WA}} = C \frac{\alpha_i}{\sum_{j=1}^{N} \alpha_j}.
\end{equation}
\vspace{-12pt}
\subsection{Shared allocation } 
This strategy considers the case of inter-class channel multiplexing among all the active devices irrespective of the packet's priority that is to be transmitted. That is, all the active devices can mutually interfere regardless of the priority of the packets being transmitted. Hence, all the devices with non-empty queues are potential interferers to the typical device's packet.  Recalling the preemptive-based mechanism, the probability of being a potential interferer is the complement of the joint probability that all the $N$ priority queues are empty. In the following theorem, the coverage probability of an $i$-th priority packet under the shared allocation is derived. 
\begin{theorem}\label{lem:shared}
	The coverage probability ${p}_i$ of a packet belonging to the $i$-th priority class under the shared allocation strategy is given by
	\begin{align}\label{eq:shared}
	&p_{i} \approx  \frac{\exp \left\{-\frac{\sigma^{2} \theta}{\rho}-\frac{2 \theta \bar{\delta} \kappa}{(\eta-2)} {}_2F_{1}(1,1-2 / \eta, 2-2 / \eta,-\theta)\right\}}{\left(1+\frac{\theta \bar{\delta} \kappa}{(1+\theta) c}\right)^{c}}, \nonumber \\
	 &\;\; \myeqb \; \;\frac{\exp \left\{-\frac{\sigma^{2} \theta}{\rho}- \bar{\delta} \kappa \sqrt{\theta} {\arctan}\Big(\sqrt{\theta}\Big)\right\}}{\left(1+\frac{\theta \bar{\delta} \kappa}{(1+\theta) c}\right)^{c}},
	\end{align}
	where $\delta=\mathbb{P}\{(0,0,\cdots, 0,0)\}$ is the joint probability of having no packets in all $N$ priority queues,  $\kappa=\frac{\mu}{\lambda C}$ is the average number of devices per \ac{BS} per channel and $c = 3.575$. The approximation is due to Approximation~\ref{approx1}(i) and the employed approximate PDF of the PPP Voronoi cell area in $\mathbb{R}^2$ as shown in (\ref{eq:laplace_2}).
\end{theorem}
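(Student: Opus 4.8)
The plan is to reuse the same stochastic-geometry pipeline that establishes Theorem \ref{lem:dedicated}, since the shared strategy differs from the dedicated one only in how the set of active co-channel interferers is thinned. Starting from the Laplace-transform expression (\ref{eq:TSP_2}), i.e. $p_i = \exp\{-\sigma^2\theta/\rho\}\,\mathcal{L}_{I_i}(\theta/\rho)$, the task reduces to evaluating the Laplace transform of the aggregate interference $I_i$ at $s=\theta/\rho$ under the shared interferer law.

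First I would characterize the interfering point process. Under shared allocation a device contributes to $I_i$ if and only if (a) it carries at least one backlogged packet in some priority queue, which by the preemptive discipline occurs with probability $\bar\delta = 1-\mathbb{P}\{(0,\dots,0)\}$, and (b) it independently and uniformly selects the typical device's channel from the full set of $C$ channels, which occurs with probability $1/C$. Since both events are independent across devices and of the positions, an independent thinning of the device \ac{PPP} $\mathrm{\Phi}$ of intensity $\mu$ yields an interfering \ac{PPP} of intensity $\mu\bar\delta/C$. Normalizing per \ac{BS} per channel gives precisely $\bar\delta\kappa$ with $\kappa=\mu/(\lambda C)$, which is the sole substitution relative to the dedicated case where the corresponding quantity is $\zeta_i\kappa_{i,\mathrm{m}}$. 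This is the analytic counterpart of the priority-agnostic nature of the shared scheme noted in the preceding remark.

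Next I would assemble the Laplace transform. Conditioning on the typical active link and using $I_i=\sum_{y_j}\mathbbm{1}_{\{a_{i,j}\}}P_j g_j\|y_j-z_o\|^{-\eta}$, I would first average over the i.i.d. unit-mean exponential interference gains $g_j$, which turns each summand into $1/(1+sP_j\|y_j-z_o\|^{-\eta})$, and then apply the probability generating functional of the thinned \ac{PPP} to convert the product over interferers into an exponential of an integral over $\mathbb{R}^2$. The complication is that full channel-inversion power control sets $P_j=\rho r_j^{\eta}$, where $r_j$ is the distance from the interferer to its own serving \ac{BS}, so the transmit power is correlated with the interferer's Voronoi cell.

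The main obstacle, and the point at which the two stated approximations enter, is decoupling this power–location dependence. Invoking Approximation \ref{approx1}(i) to treat the serving-cell distances $r_j$ as independent of the interferer positions, I would average over $r_j$ using the approximate gamma \ac{PDF} of the \ac{PPP} Voronoi cell area in $\mathbb{R}^2$ given in (\ref{eq:laplace_2}); this is exactly where the shape constant $c=3.575$ originates. Carrying out the radial integral over $\mathbb{R}^2$ produces the Gaussian hypergeometric factor ${}_2F_1(1,1-2/\eta,2-2/\eta,-\theta)$ in the exponent together with the denominator $\big(1+\theta\bar\delta\kappa/((1+\theta)c)\big)^{c}$, yielding the first line of (\ref{eq:shared}) after reinstating the noise factor $\exp\{-\sigma^2\theta/\rho\}$. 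Specializing to $\eta=4$ then collapses the hypergeometric function to $\sqrt{\theta}\,\arctan(\sqrt{\theta})$ through a standard elementary-integral identity, giving the second line. The entire derivation is identical in form to Appendix \ref{se:Appendix_B}, with $\zeta_i\kappa_{i,\mathrm{m}}$ replaced throughout by $\bar\delta\kappa$.
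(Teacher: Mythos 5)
Your top-level reduction is exactly the paper's: under shared allocation every device with any backlogged packet is a potential interferer, so the activity fraction is $\bar{\delta}$, uniform selection among all $C$ channels thins by $1/C$, all priority classes see identical interference, and the result follows by rerunning Theorem~\ref{lem:dedicated} with $\zeta_i\kappa_{i,\text{m}}$ replaced by $\bar{\delta}\kappa$. That is precisely the proof the paper gives.

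However, the middle of your argument --- where you actually assemble $\mathcal{L}_{I_i}$ --- would not go through as written. You apply a single PGFL over the thinned PPP and claim the resulting radial integral yields both the hypergeometric exponent and the denominator $\bigl(1+\theta\bar{\delta}\kappa/((1+\theta)c)\bigr)^{c}$. A PGFL can only produce an exponential functional; no multiplicative power-law factor can emerge from it. In the paper (Appendix~\ref{se:Appendix_B}, eq.~(\ref{eq:laplace_0})) the interference is first split into inter-cell and intra-cell components: the inter-cell part is handled by the PGFL-type argument of \cite{ElSawy2014} and gives the ${}_2F_1$ term in the exponent, while the intra-cell part consists of a random number of same-cell devices whose signals arrive at exactly power $\rho$ (they invert path loss to the very same BS); that number is negative-binomial with shape $c=3.575$ --- this is where the gamma approximation of the Voronoi cell area enters --- and averaging $(1+\theta)^{-n}$ against this law is the sole source of the denominator. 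Relatedly, (\ref{eq:laplace_2}) is not a PDF to average serving distances $r_j$ against; it is the intra-cell LT itself, and the power decoupling in the inter-cell term uses $\mathbb{E}_{P}\bigl[P^{2/\eta}\bigr]$ via the nearest-BS distance distribution, not the cell-area law. Your proof is repaired by doing what your final sentence already asserts: follow Appendix~\ref{se:Appendix_B} verbatim, including its intra-/inter-cell decomposition, with $\mu\zeta_i$ replaced by $\mu\bar{\delta}$ and the per-channel thinning $1/C$.
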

\begin{proof}
	Since all the packets being transmitted experience the same aggregate interference under the shared allocation, $p_i$ of all the queues are identical. Furthermore, a device is attempting a transmission if it has any packets within its $N$ priority queues. Thus, the portion of interfering devices within the network is $\mu\bar{\delta}$, where $\delta$ is the joint probability of having no packets in all the $N$ priority queues. Finally, the theorem is realized following similar steps as Theorem 1. 
\end{proof}

In summary, the shared channel allocation strategy aims at allowing the devices to utilize all available channels. Thus, a given device will have a larger pool of channels to utilize for its transmission, while experiencing mutual interference from different priority transmission. On the other hand, the dedicated strategies provides a limited number of the channels for a given class, based on an allocation criteria, either equally or proportionally. This prohibits mutual interference from different priority transmission.

\subsection{Iterative Solution}

As discussed in Section \ref{sec:QT_anaylsis}, the idle probability of an $i$-th priority queue employed at a given \ac{IoT} device governs the interference it causes within the network. In addition, the aggregate network interference affects the idle probability of each device. Thus, an inter-dependency exists between the devices activity and aggregate interference scales. Such inter-dependency can be solved iteratively as presented in Algorithm 1, which converges uniquely to a solution by virtue of the fixed point theorem \cite{Zhou2016}. Regarding the complexity, the dedicated allocation scheme is considered to be more complex compared to the shared one, as it requires an additional coordination step to compute the portion of channels available to each priority class. In order to conduct this, prior knowledge of the number of priority classes or the arrival probabilities $\alpha_i$ are required for the EA and WA strategies, respectively. On the other hand, the shared allocation alleviates such step, as all the channels are available irrespective of the packet's priority class.

\begin{figure}
	\begin{algorithm}[H]\label{alg:iterative}
		\setstretch{1.0}
		\caption{Iterative computation of $p_i$ and $\mathbf{x}_{i}$ for dedicated and shared channel allocation}\label{euclid}
		\begin{algorithmic}
		\Procedure{}{$(\alpha_1 \; \alpha_2 \; \cdots \; \alpha_N), \lambda, \mu, \eta, \theta, C, \epsilon$}  \Comment{$\epsilon$ is a convergence tolerance parameter}
		\LState initialize  $\delta$ and $\zeta_i \in[0,1]$ 
		\While {$||\mathbf{x}_{i}^k - \mathbf{x}_{i}^{k-1}|| \geq \epsilon$}   
		\LState Compute $p_i$ from (\ref{eq:dedicated_2})-dedicated or (\ref{eq:shared})-shared. 					 
		\LState Construct $\mathbf{S}_{0,i}$ and $\mathbf{S}_{i}$ from Proposition \ref{lem:S_So}.
		\If {$\bm{\pi}_i\mathbf{A}_{2,i} \mathbf{1} > \bm{\pi}_i\mathbf{A}_{0,i} \mathbf{1}$} \Comment{non-overflow (i.e., stability) condition}
		\LState Solve $\mathbf{x}_i$ based on Lemma \ref{lem:SSP} or Lemma \ref{lem:SSP_MAM}.
		\LState Compute $\delta$ and $\zeta_i$ from $\mathbf{x}_i$.
		\LState Compute $p_i$ from   (\ref{eq:dedicated_2})-dedicated or (\ref{eq:shared})-shared. 
		\Else
		\LState Set $\delta = 0$ and calculate $\zeta_i$. 
		\LState Compute $p_i$ from   (\ref{eq:dedicated_2})-dedicated or (\ref{eq:shared})-shared. 
		\LState Break.
		\EndIf
		\LState Increment k.
		\EndWhile
		\LState \Return  $p_i$ and $\mathbf{x}_{i} \; \forall i $.
		\EndProcedure
		\end{algorithmic}
	\end{algorithm}
\end{figure}

\subsection{Performance Metrics}
Based on the provided iterative framework, once can evaluate the steady state distribution of the $N$ priority queues. To this end, a number of \acp{KPI} can be evaluated, which are insightful when designing and assessing massive \ac{PMT} \ac{IoT} networks. First, the \ac{TSP} is evaluated as $\text{TSP}_i = \gamma_i p_{i}$,
where $\gamma_i$, defined in \eqref{gggg}, is the probability that the sever is available to serve the $i$-th priority packet. Articulated differently, $\gamma_i$ is the probability that all higher priority queues are empty such that the device is able to send an $i$-th priority packet. Such transmission attempt succeeds with probability $p_i$ as given by \eqref{eq:dedicated_2} for the dedicated allocation and \eqref{eq:shared} for shared allocation. Let $Q_i$ be the instantaneous number of packets at the $i$-th queue, then the average number of packets is
\begin{equation}
\mathbb{E}\big\{Q_{i}\big\}=\sum_{n=1}^{k_i} n \mathbb{P}\left\{Q_{i}=n\right\}=\sum_{n=1}^{k_i} n  x_{i, n}. 
\end{equation}
For the $i$-th priority packet, its transmission will be postponed till all the packets belonging to higher classes are successfully served. Transmission availability for the $i$-th priority class in a generic device denotes the probability that the $i$-th priority queue is non-empty and that all higher priority queues are empty. Thus, transmission availability is evaluated as 
\begin{equation}
\mathcal{A}_i = 1 - \sum_{j = 1}^{i-1} \sum_{m_j = 1}^{k_j} x_{j,m_j}.
\end{equation}
A critical \ac{KPI} in prioritized traffic is the information freshness, which is quantified via the age of information \cite{Kaul2018}. Specifically, we focus in our work on the \ac{PAoI}, which is defined as as the value of information age resulted immediately prior to receiving a given packet \cite{Huang2015}.\footnote{Focus on the \ac{PAoI} stems from its importance in analyzing guaranteed performance for time critical applications.} Mathematically, the \ac{PAoI} is defined as	
\begin{equation}\label{PAoI}
	\text{PAoI}_i = \mathbb{E}\big\{I_{i}\big\} + \mathbb{E}\big\{W_{i}\big\} + \mathbb{E}\big\{D_{i}\big\},
	\end{equation}
where $\mathbb{E}\big[W_{i}\big], \mathbb{E}\big[D_{i}\big]$ and $\mathbb{E}\big[I_{i}\big]$ denote the average queueing delay, average 	
transmission delay and inter-arrival delay, respectively. Based on the adopted geometric distribution for packets arrival, the average inter-arrival times simplifies to $\mathbb{E}\big[I_{i}\big] = \frac{1}{\alpha_i}$. In addition, let $W_i$ be the queueing delay (i.e., number of time slots spent in the queue before the service of the $i$-th priority queue starts) for a randomly selected packet, then the average queueing delay is given by 
\begin{equation}\label{eq_queuing_delay}
\mathbb{E}\big\{W_{i}\big\}=\sum_{n=0}^{\infty} n \mathbb{P}\{W_{i}=n\}, 
\end{equation}
where the temporal distribution of the delay (i.e., across different packets) can be obtained as $\mathbb{P}\left\{W_{i}=0\right\} = x_{i,0}$ and $\mathbb{P}\{W_{i}=j\} = \sum_{k=1}^{j} \mathbf{x}_{i,k}\mathbf{G}_j^{(k)}\mathbf{1}$, where $\mathbf{G}_{i,j}^{(k)}$ represents the probability of having $k$ packets in the $i$-th priority queue and being serviced in $j$ time slots with
\begin{equation}
\mathbf{G}_{i,j}^{(k)}=\left\{\begin{array}{ll}
{\mathbf{S}_i^{j-1} \mathbf{s}_i\bm{\beta}_i} & {k=1,} \\
{(\mathbf{s}_i \bm{\beta}_i)^{k}} & {j=k, k \geq 1,} \\
{\mathbf{S}_i \mathbf{G}_{i,j-1}^{(k)}+\mathbf{s} \beta \mathbf{G}_{i,j-1}^{(k-1)}} & {k \geq j \geq 1.}
\end{array}\right.
\end{equation}
Based on the considered PH type distribution for the vacation duration, let  $W_i$ be the number of time slots spent in the queue before the service starts for a randomly chosen packet. Averaging over all packets, the transmission delay can be computed as \cite[Section 2.5.3]{S.Alfa2015}
\begin{equation}\label{eq_local_delay}
\mathbb{E}\big\{D_{i}\big\}= \bm{\beta}_i(\mathbf{I}_{m_i} - \mathbf{S}_i)^{-1}. 
\end{equation}
Finally, the PAoI is evaluated by plugging (\ref{eq_queuing_delay}) and (\ref{eq_local_delay}) into (\ref{PAoI}). 
\thispagestyle{empty}
\section{Simulation Results}\label{sec:simulation_results}
Through this section various numerical results are presented that aim at (a) validating the proposed analytical model, (b) highlighting the influence of the different channel allocation strategies, and (c) showing priority-aware wireless-based system design insights. 

\subsection{Simulation Methodology}

The developed simulation framework  incorporates microscopic and macroscopic averaging, where the former addresses the steady state temporal statistics of the different queues employed at each device and the latter addresses the stochastic geometric network-wide performance. The simulation area is $10 \times 10 \text{ km}^2$ with a wrapped-around boundaries to ensure unbiased statistics imposed by the network boundary devices. Unless otherwise stated, we consider the following physical layer parameters: $\kappa = 1$ devices/BS/channel, $C = 64$ channels, $\eta = 4$ and $\rho=\sigma^2=-90$ dBm. For the MAC layer parameters, we consider three priority classes with $(\alpha_1, \alpha_2, \alpha_3 )=(0.1, 0.25, 0.35)$, where all the queues have equal size (i.e., $k_1=k_2=k_3=8$). The proposed priority-aware transmission schemes are compared to a reference multi-stream \ac{PA} \ac{FCFS} queueing model. In such model, the transmission is granted on an \ac{FCFS} basis, equally among  the all existing $N$ priority classes.

\begin{figure*}
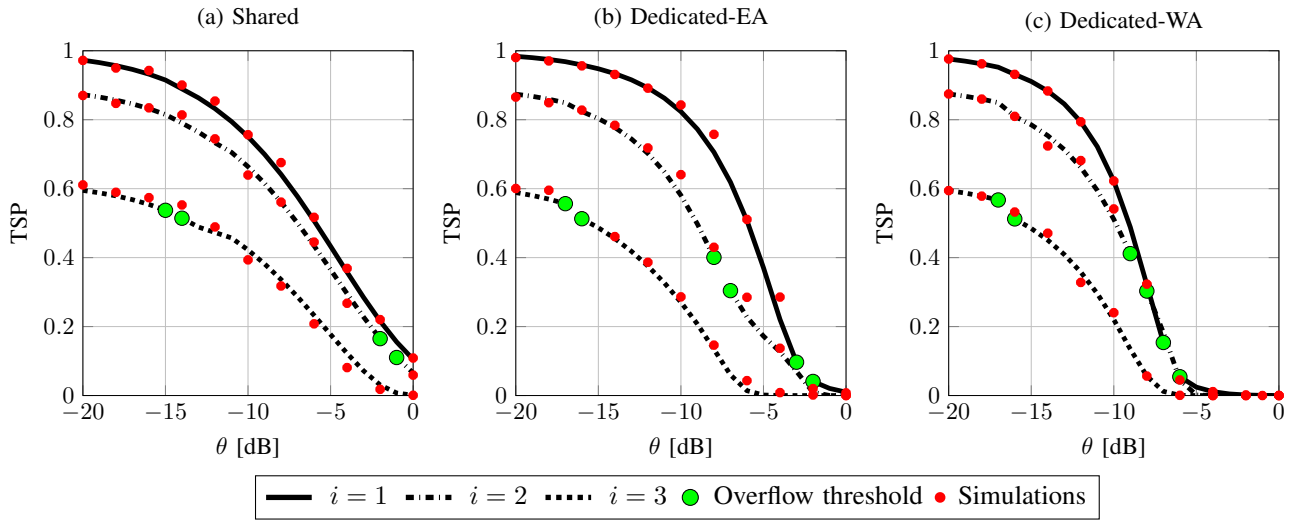

	\centering
	\ifCLASSOPTIONdraftcls
	\input{simulation_results/figures/TSP/1col/TSP.tex}
	\vspace{-0.1in}
	\else
	\input{simulation_results/figures/TSP/2col/TSP.tex}
	\fi
	\caption{TSP for three priority classes as a function of the \ac{SINR} threshold $\theta$  under (a) Shared (b) Dedicated-equal allocation (c) Dedicated-weighted allocation.}
	\label{fig:TSP} 
\end{figure*}

Synchronous time-slotted system is adopted and each microscopic simulation run is considered as a time slot where independent channel gains are instantiated and packets are generated probabilistically. The queue occupancy for each of the considered priority classes are tracked. For a transition from one time slot to another, packets are independently generated at every device for all queues based on the batch arrival process (i.e., $\alpha_i$). Every device with a non-empty queue of the $N$ queues tries to communicate its backlogged packets with its serving \ac{BS} based on the employed preemptive priority-aware transmission strategy. For a device with non-empty $i$-th priority queue, a packet is dispatched from the $i$-th priority queue if and only if i) all higher priority queues are empty, and ii) the achieved uplink \ac{SINR}$_i$ on the selected channel is greater than $\theta$. In order to ensure that the different queues at the devices are in steady state,  simulation is first initiated with all queues at the devices as being idle and then it runs for a sufficiently high number of time slots until the steady-state is reached. Let $\hat{\mathbf{x}}^k = [\mathbf{x}_{1,0}^k, \mathbf{x}_{2,0}^k, \cdots, \mathbf{x}_{N,0}^k]$ denotes the idle steady state probability  for the $t$-th iteration of the $N$ queues. Mathematically, the steady state is realized once $||\hat{\mathbf{x}}^k-\hat{\mathbf{x}}^{k-1}|| < \epsilon$, where $\epsilon$ is some predetermined tolerance. After steady state is reached, all temporal statistics are then gathered based on sufficiently large number of microscopic realizations. Finally, the whole process is repeated for sufficiently large number of macroscopic network realizations to ensure spatial ergodicity is reached. 

\subsection{Performance Evaluation}

We start with the framework validation for all considered priority classes and proposed channel allocation strategies. Fig. \ref{fig:TSP} shows the\ac{TSP} for three priority classes against the \ac{SINR} threshold $\theta$. The close matching between the theoretical and simulation results validates the developed spatiotemporal mathematical model. Moreover, focusing on a given channel allocation strategy and a priority class for low values of $\theta$, the devices are able to empty their queues and go into idle state when operating below the overflow threshold. This leads to a lower network aggregate interference. As $\theta$ increases, the transmission success probability decreases, which leads in turn into having higher aggregate network interference. Based on the prioritized transmission and the assumption that $\alpha_j > \alpha_i,\; \forall j > i$, it is expected that $\text{TSP}_j < \text{TSP}_i$. This is justified as lower priority packets are served only if all the higher priority queues are empty. In addition, it is clear that the SINR threshold $\theta$, at which the system transitions from non-overflow to overflow operation depends on the priority class. Note that the overflow thresholds depict the point where the probability of queues overflow starts to dominate.

\begin{figure}
	\centering
	\ifCLASSOPTIONdraftcls
%

\definecolor{mycolor1}{rgb}{0.4, 1.0, 0.0}
\definecolor{mycolor2}{rgb}{0.03, 0.91, 0.87}
\definecolor{mycolor3}{rgb}{1,0.0,1.0}
\definecolor{mycolor4}{rgb}{0.69, 0.4, 0.0}%
\definecolor{mycolor5}{rgb}{0.97690,0.98390,0.08050}%

\begin{tikzpicture}[scale=1]
\begin{groupplot}[group style={
	group name=myplot,
	group size= 3 by 1, horizontal sep=1cm},height=1.5 in, width= 0.35*\columnwidth, xtick={-18,-14,-10,-6,-2},]
\nextgroupplot[title={(a) First priority class}, align=left,
major x tick style = transparent,
ylabel={TSP},
xlabel={$\theta$ [dB]},
ybar=4*\pgflinewidth,
bar width=9pt,
ymajorgrids = true,
scaled y ticks = false,
ymin=0, 
ymax=1,
ytick={0,0.2,0.4,0.6,0.8,1},
legend to name=grouplegend1,
legend cell align=left,
legend style={
	legend columns=4,fill=none,draw=black,anchor=center,align=left, column sep=0.13cm}]

\addlegendimage{style={mycolor1,fill=mycolor1,mark=none}}
\addlegendentry{Shared}
\addlegendimage{style={mycolor2,fill=mycolor2,mark=none}}
\addlegendentry{Ded-EA}
\addlegendimage{style={mycolor3,fill=mycolor3,mark=none}}
\addlegendentry{Ded-WA}
\addlegendimage{style={red,fill=red,mark=none}}
\addlegendentry{PA}

\addplot[ybar, bar width=0.356, fill=mycolor1, draw=black, area legend] table[row sep=crcr] {%
	-18	0.9569\\
	-14	0.8894\\
	-10	0.7508\\
	-6	0.5068\\
	-2	0.216\\
};
\addplot[ybar, bar width=0.356, fill=mycolor2, draw=black, area legend] table[row sep=crcr] {%
	-18	0.9748\\
	-14	0.9337\\
	-10	0.8235\\
	-6	0.5073\\
	-2	0.04092\\
};
\addplot[ybar, bar width=0.356, fill=mycolor3, draw=black, area legend] table[row sep=crcr] {%
	-18	0.9621\\
	-14	0.8833\\
	-10	0.6224\\
	-6	0.05458\\
	-2	0.00165\\
};

\addplot[ybar, bar width=0.356, fill=red, draw=black, area legend] table[row sep=crcr] {%
	-18	0.540335800417419\\
	-14	0.447259870571471\\
	-10	0.266645893042116\\
	-6	0.116312048572885\\
	-2	0.0421205769842131\\
};

\coordinate (c1) at (rel axis cs:0.5,0);

\nextgroupplot[title={(b) Second priority class}, align=left,
major x tick style = transparent,
xlabel={$\theta$ [dB]},
ybar=4*\pgflinewidth,
bar width=9pt,
ymajorgrids = true,
scaled y ticks = false,
ymin=0, 
ymax=1,
ytick={0,0.2,0.4,0.6,0.8,1}]

\addplot[ybar, bar width=0.356, fill=mycolor1, draw=black, area legend] table[row sep=crcr] {%
	-18	0.8569\\
	-14	0.7901\\
	-10	0.6638\\
	-6	0.4342\\
	-2	0.165\\
};
\addplot[ybar, bar width=0.356, fill=mycolor2, draw=black, area legend] table[row sep=crcr] {%
	-18	0.8595\\
	-14	0.7774\\
	-10	0.5807\\
	-6	0.2275\\
	-2	0.009\\
};
\addplot[ybar, bar width=0.356, fill=mycolor3, draw=black, area legend] table[row sep=crcr] {%
	-18	0.8598\\
	-14	0.7539\\
	-10	0.5112\\	
	-6	0.0452\\
	-2	0\\
};

\addplot[ybar, bar width=0.356, fill=red, draw=black, area legend] table[row sep=crcr] {%
	-18	0.629796297990205\\
	-14	0.546844614495926\\
	-10	0.376753073876106\\
	-6	0.197797679775553\\
	-2	0.0771908083186376\\
};

\coordinate (c2) at (rel axis cs:0.5,0);

\nextgroupplot[title={(c) Third priority class}, align=left,
major x tick style = transparent,
xlabel={$\theta$ [dB]},
ybar=4*\pgflinewidth,
bar width=9pt,
ymajorgrids = true,
scaled y ticks = false,
ymin=0, 
ymax=1,
ytick={0,0.2,0.4,0.6,0.8,1}]

\addplot[ybar, bar width=0.356, fill=mycolor1, draw=black, area legend] table[row sep=crcr] {%
	-18	0.5791\\
	-14	0.5143\\
	-10	0.4223\\
	-6	0.2326\\
	-2	0.0307\\
};
\addplot[ybar, bar width=0.356, fill=mycolor2, draw=black, area legend] table[row sep=crcr] {%
	-18 0.56969\\
	-14	0.4454\\
	-10	0.2708\\
	-6	0.01406\\
	-2	0\\
};
\addplot[ybar, bar width=0.356, fill=mycolor3, draw=black, area legend] table[row sep=crcr] {%
	-18	0.5783\\
	-14	0.4507\\
	-10	0.2202\\	
	-6  0.00063\\
	-2	0\\
};

\addplot[ybar, bar width=0.356, fill=red, draw=black, area legend] table[row sep=crcr] {%
	-18	0.688979930973781\\
	-14	0.613850977475116\\
	-10	0.458921623893824\\	
	-6  0.26642354973061\\
	-2	0.109568406578523\\
};

\end{groupplot}

\coordinate (c3) at ($(c1)!1.05!(c2)$);
\node[below] at (c3 |- current bounding box.south)
{\ref{grouplegend1}};
\end{tikzpicture}
	\vspace{-0.15in}
	\else
%

\definecolor{mycolor1}{RGB}{43, 140, 190}
\definecolor{mycolor2}{RGB}{166, 189, 219}
\definecolor{mycolor3}{RGB}{236, 231, 242}
\definecolor{mycolor4}{RGB}{168, 221, 181}
\definecolor{mycolor5}{rgb}{0.97690,0.98390,0.08050}%

\begin{tikzpicture}[scale=1]
\begin{groupplot}[group style={
	group name=myplot,
	group size= 1 by 3, vertical sep=1.5cm},height=4cm, width= 1*\columnwidth, xtick={-18,-14,-10,-6,-2},]
\nextgroupplot[title={(a) First priority class}, align=left,
major x tick style = transparent,
ylabel={TSP},
ybar=4*\pgflinewidth,
bar width=9pt,
ymajorgrids = true,
scaled y ticks = false,
ymin=0, 
ymax=1,
ytick={0,0.2,0.4,0.6,0.8,1},
legend to name=grouplegend1,
legend cell align=left,
legend style={
	legend columns=4,fill=none,draw=black,anchor=center,align=left, column sep=0.13cm}]

\addlegendimage{style={mycolor1,fill=mycolor1,mark=none}}
\addlegendentry{Shared}
\addlegendimage{style={mycolor2,fill=mycolor2,mark=none}}
\addlegendentry{Ded-EA}
\addlegendimage{style={mycolor3,fill=mycolor3,mark=none}}
\addlegendentry{Ded-WA}
\addlegendimage{style={mycolor4,fill=mycolor4,mark=none}}
\addlegendentry{PA}

\addplot[ybar, bar width=0.356, fill=mycolor1, draw=black, area legend] table[row sep=crcr] {%
	-18	0.9569\\
	-14	0.8894\\
	-10	0.7508\\
	-6	0.5068\\
	-2	0.216\\
};
\addplot[ybar, bar width=0.356, fill=mycolor2, draw=black, area legend] table[row sep=crcr] {%
	-18	0.9748\\
	-14	0.9337\\
	-10	0.8235\\
	-6	0.5073\\
	-2	0.04092\\
};
\addplot[ybar, bar width=0.356, fill=mycolor3, draw=black, area legend] table[row sep=crcr] {%
	-18	0.9621\\
	-14	0.8833\\
	-10	0.6224\\
	-6	0.05458\\
	-2	0.00165\\
};

\addplot[ybar, bar width=0.356, fill=mycolor4, draw=black, area legend] table[row sep=crcr] {%
	-18	0.540335800417419\\
	-14	0.447259870571471\\
	-10	0.266645893042116\\
	-6	0.116312048572885\\
	-2	0.0421205769842131\\
};

\coordinate (c1) at (rel axis cs:0.5,0);

\nextgroupplot[title={(b) Second priority class}, align=left,
major x tick style = transparent,
ybar=4*\pgflinewidth,
bar width=9pt,
ymajorgrids = true,
scaled y ticks = false,
ylabel={TSP},
ymin=0, 
ymax=1,
ytick={0,0.2,0.4,0.6,0.8,1}]

\addplot[ybar, bar width=0.356, fill=mycolor1, draw=black, area legend] table[row sep=crcr] {%
	-18	0.8569\\
	-14	0.7901\\
	-10	0.6638\\
	-6	0.4342\\
	-2	0.165\\
};
\addplot[ybar, bar width=0.356, fill=mycolor2, draw=black, area legend] table[row sep=crcr] {%
	-18	0.8595\\
	-14	0.7774\\
	-10	0.5807\\
	-6	0.2275\\
	-2	0.009\\
};
\addplot[ybar, bar width=0.356, fill=mycolor3, draw=black, area legend] table[row sep=crcr] {%
	-18	0.8598\\
	-14	0.7539\\
	-10	0.5112\\	
	-6	0.0452\\
	-2	0\\
};

\addplot[ybar, bar width=0.356, fill=mycolor4, draw=black, area legend] table[row sep=crcr] {%
	-18	0.629796297990205\\
	-14	0.546844614495926\\
	-10	0.376753073876106\\
	-6	0.197797679775553\\
	-2	0.0771908083186376\\
};

\coordinate (c2) at (rel axis cs:0.5,0);

\nextgroupplot[title={(c) Third priority class}, align=left,
major x tick style = transparent,
xlabel={$\theta$ [dB]},
ybar=4*\pgflinewidth,
bar width=9pt,
ymajorgrids = true,
ylabel={TSP},
scaled y ticks = false,
ymin=0, 
ymax=1,
ytick={0,0.2,0.4,0.6,0.8,1}]

\addplot[ybar, bar width=0.356, fill=mycolor1, draw=black, area legend] table[row sep=crcr] {%
	-18	0.5791\\
	-14	0.5143\\
	-10	0.4223\\
	-6	0.2326\\
	-2	0.0307\\
};
\addplot[ybar, bar width=0.356, fill=mycolor2, draw=black, area legend] table[row sep=crcr] {%
	-18 0.56969\\
	-14	0.4454\\
	-10	0.2708\\
	-6	0.01406\\
	-2	0\\
};
\addplot[ybar, bar width=0.356, fill=mycolor3, draw=black, area legend] table[row sep=crcr] {%
	-18	0.5783\\
	-14	0.4507\\
	-10	0.2202\\	
	-6  0.00063\\
	-2	0\\
};

\addplot[ybar, bar width=0.356, fill=mycolor4, draw=black, area legend] table[row sep=crcr] {%
	-18	0.688979930973781\\
	-14	0.613850977475116\\
	-10	0.458921623893824\\	
	-6  0.26642354973061\\
	-2	0.109568406578523\\
};

\end{groupplot}

\coordinate (c3) at ($(c1)!1.05!(c2)$);
\node[below] at (c3 |- current bounding box.south)
{\ref{grouplegend1}};
\end{tikzpicture}
	\fi	
	\caption{Comparison of different allocation strategies for three priority classes.}
	\label{fig:TSP_bar} 
\end{figure}
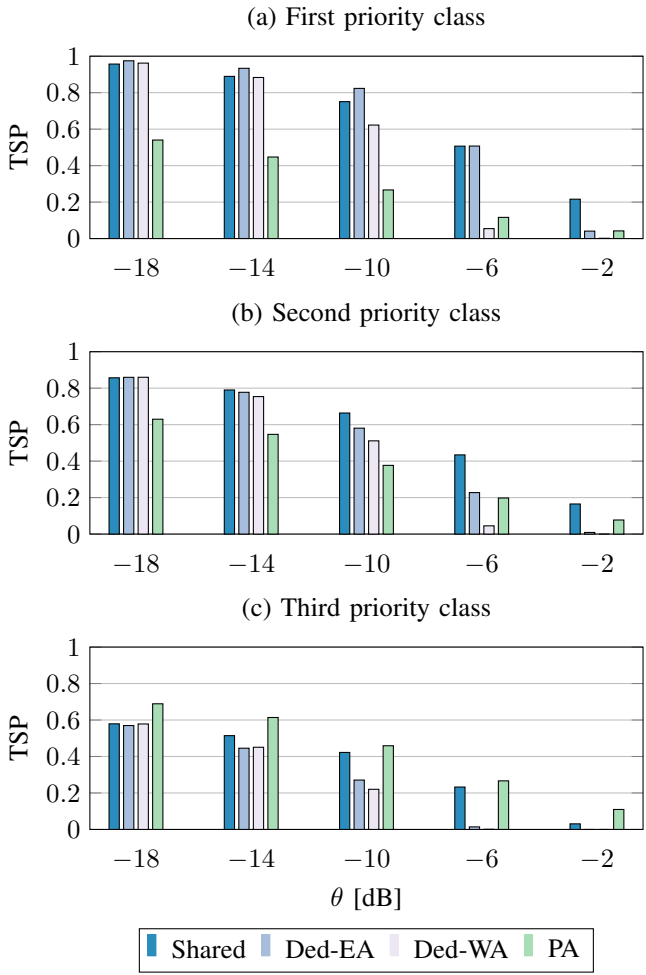

To better assess the performance of the different allocation strategies, Fig. \ref{fig:TSP_bar} compares the considered strategies against the \ac{PA} strategy. First, it is observed that for lower values of $\theta$, the dedicated-EA strategy outperforms the shared strategy. This is attributed to the successful packets transmission attempts from the first priority class while benefiting from interference protection from lower priority classes. As $\theta$ increases, the shared strategy outperforms the dedicated one, which results from the head of queue effect of the higher priority packets. In the dedicated strategy, when several devices have high priority packets, they keep interfering on a subset of the available channels leaving other channels for lower priority packets underutilized. Moreover, the dedicated-WA strategy fails to provide gains in the high $\theta$ region, due to the strong interference experienced by the higher priority packets, that are allocated a smaller number of channels (i.e., compared to the dedicated-EA strategy). Thus, hindering the transmission of lower priority packets, that are assigned larger pool of channels, due to the imposed priority-aware transmission discipline. Additionally, the shared channel allocation strategy alleviates the additional  overhead required for channel allocation procedures, that is essential for the dedicated strategies. For the \ac{PA} scheme, we observe the performance deterioration experienced by the higher priority classes (e.g. first and second classes), which results from the \ac{PA} negligence of higher priority traffic. For the \ac{PA} scheme, a given packet is granted service depending on its arrival time, not its priority. Accordingly, depending on the arrival probability, transmission probability is larger for traffic with higher arrival probabilities (i.e., third class has larger transmission probability compared to second and first classes). For the third priority class, due to the \ac{FCFS} nature of the \ac{PA} scheme, it outperforms the priority-aware strategies. Accordingly, the \ac{TSP} values depict a flipped behavior among the higher and lower priority classes.  

\begin{figure} 
	\centering
	\ifCLASSOPTIONdraftcls
%
%
\definecolor{mycolor1}{rgb}{0.4, 1.0, 0.0}
\definecolor{mycolor2}{rgb}{0.03, 0.91, 0.87}
\definecolor{mycolor3}{rgb}{1,0.0,1.0}
\definecolor{mycolor4}{rgb}{0.69, 0.4, 0.0}%
\definecolor{mycolor5}{rgb}{0.97690,0.98390,0.08050}%

\begin{tikzpicture}[scale=0.8]
\begin{axis}[%
width=1.1*\columnwidth,
height=4cm,
scale only axis,
bar shift auto,
xmin=-19.8488888888889,
xmax=-14.1,
xtick={-19, -17, -15},
xlabel style={font=\color{white!15!black}},
xlabel={$\theta$ [dB]},
ymin=0,
ymax=10,
ylabel style={font=\color{white!15!black}},
ylabel={Mean delay [time slots]},
axis background/.style={fill=white},
xmajorgrids,
ymajorgrids,
legend style={ legend columns=3, at={(0.01,0.58)}, anchor=south west, legend cell align=left, align=left, draw=white!15!black}
]
\addplot[ybar, bar width=0.15, fill=red,line width=1.0pt, draw=black, area legend] table[row sep=crcr] {%
	-19	1.94741119855404\\
	-17	2.16507623145837\\
	-15	2.25213605301714\\
};
\addlegendentry{PA - i = 1}

\addplot[ybar, bar width=0.15, fill=red,line width=1.0pt, dashed, draw=black, area legend] table[row sep=crcr] {%
	-19	1.94741119855404\\
	-17	2.16507623145837\\
	-15	2.25213605301714\\
};
\addlegendentry{PA - i = 2}

\addplot[ybar, bar width=0.15, fill=red,line width=1.0pt, dotted, draw=black, area legend] table[row sep=crcr] {%
	-19	1.94741119855404\\
	-17	2.16507623145837\\
	-15	2.25213605301714\\
};
\addlegendentry{PA - i = 3}

\addplot[ybar, bar width=0.15, fill=mycolor1,line width=1.0pt, draw=black, area legend] table[row sep=crcr] {%
	-19	1.01911429432154\\
	-17	1.02905551126438\\
	-15	1.05492593778342\\
};
\addlegendentry{Shared - i = 1}

\addplot[ybar, bar width=0.15, fill=mycolor1,line width=1.0pt, dashed, draw=black, area legend] table[row sep=crcr] {%
	-19	1.18396503980934\\
	-17	1.20615672988849\\
	-15	1.26043706449496\\
};
\addlegendentry{Shared - i = 2}

\addplot[ybar, bar width=0.15, fill=mycolor1,line width=1.0pt, dotted, draw=black, area legend] table[row sep=crcr] {%
	-19	2.3940279688056\\
	-17	2.80512050610173\\
	-15	3.58637273481812\\
};
\addlegendentry{Shared - i = 3}

\addplot[ybar, bar width=0.15, fill=mycolor2,line width=1.0pt, draw=black, area legend] table[row sep=crcr] {%
	-19	1.02855854103283\\
	-17	1.03718349529866\\
	-15	1.06462820282271\\
};
\addlegendentry{Ded-EA - i = 1}

\addplot[ybar, bar width=0.15, fill=mycolor2,line width=1.0pt, dashed, draw=black, area legend] table[row sep=crcr] {%
	-19	1.23818275145404\\
	-17	1.25374125368888\\
	-15	1.34975907246833\\
};
\addlegendentry{Ded-EA - i = 2}

\addplot[ybar, bar width=0.15, fill=mycolor2,line width=1.0pt, dotted, draw=black, area legend] table[row sep=crcr] {%
	-19	3.5944947537204\\
	-17	3.64556083130286\\
	-15	7.99349490003494\\
};
\addlegendentry{Ded-EA - i = 3}

\end{axis}
\end{tikzpicture}%
	\vspace{-0.15in}
	\else
%

\definecolor{mycolor1}{RGB}{43, 140, 190}
\definecolor{mycolor2}{RGB}{166, 189, 219}
\definecolor{mycolor4}{RGB}{168, 221, 181}

\begin{tikzpicture}[scale=0.8]
\begin{axis}[%
width=1.1*\columnwidth,
height=4cm,
scale only axis,
bar shift auto,
xmin=-19.8488888888889,
xmax=-14.1,
xtick={-19, -17, -15},
xlabel style={font=\color{white!15!black}},
xlabel={$\theta$ [dB]},
ymin=0,
ymax=10,
ylabel style={font=\color{white!15!black}},
ylabel={Average packet delay [time slots]},
axis background/.style={fill=white},
xmajorgrids,
ymajorgrids,
legend style={ legend columns=3, at={(0.005,0.62)}, anchor=south west, legend cell align=left, align=left, draw=white!15!black}
]
\addplot[ybar, bar width=0.15, fill=mycolor4,line width=1.0pt, area legend] table[row sep=crcr] {%
	-19	1.94741119855404\\
	-17	2.16507623145837\\
	-15	2.25213605301714\\
};
\addlegendentry{PA - i = 1}

\addplot[ybar, bar width=0.15, fill=mycolor4,line width=1.0pt,postaction={
	pattern=north east lines
},  area legend] table[row sep=crcr] {%
	-19	1.94741119855404\\
	-17	2.16507623145837\\
	-15	2.25213605301714\\
};
\addlegendentry{PA - i = 2}

\addplot[ybar, bar width=0.15, fill=mycolor4,line width=1.0pt, postaction={
	pattern=crosshatch
}, area legend] table[row sep=crcr] {%
	-19	1.94741119855404\\
	-17	2.16507623145837\\
	-15	2.25213605301714\\
};
\addlegendentry{PA - i = 3}

\addplot[ybar, bar width=0.15, fill=mycolor1,line width=1.0pt, area legend] table[row sep=crcr] {%
	-19	1.01911429432154\\
	-17	1.02905551126438\\
	-15	1.05492593778342\\
};
\addlegendentry{Shared - i = 1}

\addplot[ybar, bar width=0.15, fill=mycolor1,line width=1.0pt,postaction={
	pattern=north east lines
}, area legend] table[row sep=crcr] {%
	-19	1.18396503980934\\
	-17	1.20615672988849\\
	-15	1.26043706449496\\
};
\addlegendentry{Shared - i = 2}

\addplot[ybar, bar width=0.15, fill=mycolor1,line width=1.0pt, postaction={
	pattern=crosshatch
}, draw=black, area legend] table[row sep=crcr] {%
	-19	2.3940279688056\\
	-17	2.80512050610173\\
	-15	3.58637273481812\\
};
\addlegendentry{Shared - i = 3}

\addplot[ybar, bar width=0.15, fill=mycolor2,line width=1.0pt,  area legend] table[row sep=crcr] {%
	-19	1.02855854103283\\
	-17	1.03718349529866\\
	-15	1.06462820282271\\
};
\addlegendentry{Ded-EA - i = 1}

\addplot[ybar, bar width=0.15, fill=mycolor2,line width=1.0pt, postaction={
	pattern=north east lines
}, draw=black, area legend] table[row sep=crcr] {%
	-19	1.23818275145404\\
	-17	1.25374125368888\\
	-15	1.34975907246833\\
};
\addlegendentry{Ded-EA - i = 2}

\addplot[ybar, bar width=0.15, fill=mycolor2,line width=1.0pt, postaction={
	pattern=crosshatch
}, area legend] table[row sep=crcr] {%
	-19	3.5944947537204\\
	-17	3.64556083130286\\
	-15	7.99349490003494\\
};
\addlegendentry{Ded-EA - i = 3}

\end{axis}
\end{tikzpicture}%
	\fi	
	\caption{Average packet delay for priority agnostic, shared, and dedicated-equal strategies.}
	\label{fig:packet_delay} 
\end{figure}
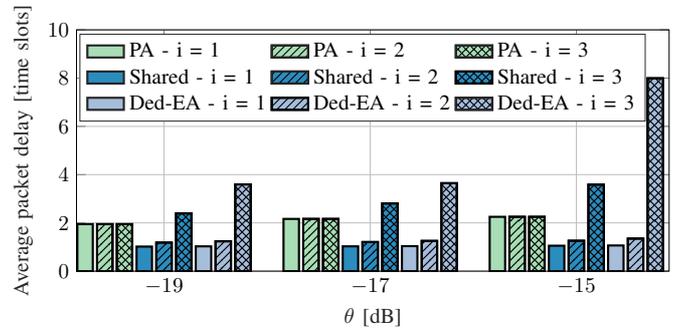

To further investigate the prioritization effect, the average packet delay is shown in Fig. \ref{fig:packet_delay}.\footnote{The delay is defined as the time elapsed from packet generation at the device until its successful reception at the BS.} Due to its priority negligence of the PA strategy, the packets belonging to the three classes experiences nearly the same waiting time with different values of $\theta$. This is attributed to the inter-class \ac{FCFS} discipline of the \ac{PA}. However, for the priority-aware strategies, high priority packets experience lower packet delays when compared to lower priority packets. The figure also highlights the traffic prioritization cost  on lower priority packets, which is due to the service interruption upon higher priority packets arrival. Hence, it is important to ensure that the prioritized transmission offers a differentiated service that meets the \ac{QoS} requirement for all priority classes.  

\begin{figure}
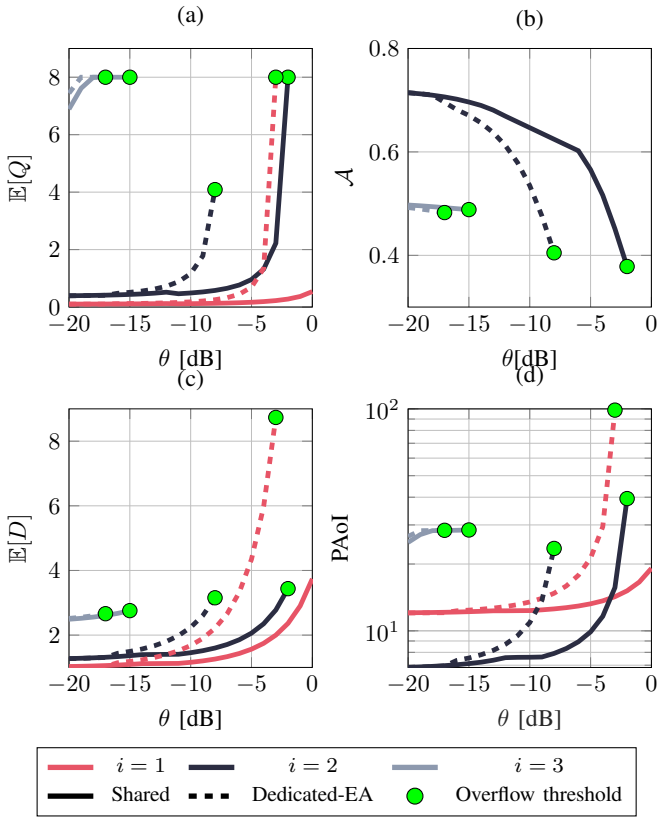

	\centering
	\ifCLASSOPTIONdraftcls
	\input{simulation_results/figures/perf_evals/1col/perf_evals.tex}
	\vspace{-0.25in}
	\else
	\input{simulation_results/figures/perf_evals/2col/perf_evals.tex}
	\fi
	\caption{Performance evaluation for shared and dedicated-equal allocation strategies (a) average number of packets (b) transmission availability (c) transmission delay (e) peak age of information.}
	\label{fig:perf_evals} 
\end{figure}

Throughout the rest of this section, we will focus on assessing the shared and dedicated-EA strategies due to their promised performance superiority as shown in Fig. To this end, \ref{fig:TSP_bar}. Fig. \ref{fig:perf_evals} showcases different \acp{KPI} under the mentioned strategies. As a common behavior in all the sub-figures,  we observe a large performance superiority of the shared allocation strategy over the dedicated-EA one in the high $\theta$ regime. As $\theta$ increases, packets transmission is subjected to a more stringent requirement on the achieved SINR. This leads to increased retransmissions, thus, increasing the aggregate network interference. Furthermore, it can be interpreted that for the low $\theta$ regime, head of the queue is determined by the arrival priority, whereas for the high $\theta$ regime, head of the queue is determined by the prioritized-based preemption discipline. In details, Fig. \ref{fig:perf_evals}(a) presents the average number of packets, where it is observed that the shared strategy results in lower number of packets residing in the queues at the high $\theta$ regime. Within a given channel allocation strategy, as the priority of the queue gets lower, its average number of packets increases. Packets residing in a given queue will have to wait until all the higher queues are served, while new packets might arrive and accumulate in the queues. The figure also highlights the effect of the queue's priority on the overflow threshold. The transmission availability is presented in Fig. \ref{fig:perf_evals}(b). For the first priority class, such a metric equals one as highest priority packets will be served upon their arrival. However, for lower priority packets, the transmission availability decreases. Fig .\ref{fig:perf_evals}(c) demonstrates the transmission delay, where it can be observed the superiority of the shared over the dedicated strategy. Finally, Fig .\ref{fig:perf_evals}(d) shows the \ac{PAoI}. We observe a flipped behavior between the first and second priority classes when considering a given allocation strategy. This is justified based on the PAoI sensitivity to the inter-arrival delays (recall $\alpha_1 = 0.1$ and $\alpha_2=0.25$). Such a behavior is expected, since \ac{PAoI} is lower when packets with low queueing delays are delivered regularly. Thus, larger inter-arrival times increases the PAoI. As $\theta$ increases, the queueing delays start to dominate the \ac{PAoI}, yielding the queues eventually in an overflow state. Finally, via observing the reported results in Figure \ref{fig:perf_evals}, it can be concluded that the exclusive resource partitioning for prioritized grant-free uplink traffic in \ac{IoT} systems is outperformed by the shared channel allocation strategy.

In addition, Fig. \ref{fig:PMF_WT} presents the average queueing delay distribution over the first five time slots. The queueing delay distributions is dependent on the prioritization and the allocation strategy. In specific, the distribution tail decays for higher priority classes, whereas for the lower classes, it takes longer to dispatch their packets. We observe also a larger tail for the dedicated strategy, when compared to the shared one over the considered priority classes.

\begin{figure} 
	\centering
	\definecolor{mycolor1}{RGB}{43,45,66}%
\definecolor{mycolor2}{RGB}{141,153,174}%
\definecolor{mycolor3}{RGB}{231,84,102}

\begin{tikzpicture}
\begin{axis}[%
height=1.8in,
width=0.8\columnwidth,
scale only axis,
bar shift auto,
log origin=infty,
xmin=0.506666666666667,
xmax=5.49333333333333,
xtick={1, 2, 3, 4, 5},
xlabel={$n$ [Time slot]},
ymode=log,
ymin=1e-04,
ymax=1,
ylabel={$\mathbb{P}\{W=n\}$},
legend style={legend columns=3,fill=none,draw=black,anchor=center,align=left, column sep=0.13cm, at={(0.6,0.92)}}
]

\addplot[ybar, bar width=0.1, fill=mycolor3, draw=black, area legend] table[row sep=crcr] {%
	1	0.884179410263595\\
	2	0.0982421566959545\\
	3	0.0149105023308881\\
	4	0.00226301098465761\\
	5	0.000343463862117487\\
};
\addlegendentry{\footnotesize $i=1$}]

\addplot[ybar, bar width=0.1, fill=mycolor1, draw=black, area legend] table[row sep=crcr] {%
	1	0.613549088960524\\
	2	0.204516362986842\\
	3	0.0784372902648972\\
	4	0.0468422508326947\\
	5	0.0251669102589455\\
};
\addlegendentry{\footnotesize $i=2$}]

\addplot[ybar, bar width=0.1, fill=mycolor2, draw=black, area legend] table[row sep=crcr] {%
	1	0.066800306872735\\
	2	0.0359693960083956\\
	3	0.0308713684813394\\
	4	0.0317908219896159\\
	5	0.0317002025595595\\
};
\addlegendentry{\footnotesize $i=3$}]

\addplot[ybar, bar width=0.1, fill=mycolor3, postaction={
	pattern=crosshatch
}, area legend] table[row sep=crcr] {%
	1	0.832285593550967\\
	2	0.0924761770612179\\
	3	0.041485665893205\\
	4	0.0186108523221424\\
	5	0.00834899999070087\\
};

\addplot[ybar, bar width=0.1, fill=mycolor1, postaction={
	pattern=crosshatch
}, area legend] table[row sep=crcr] {%
	1	0.410661764258016\\
	2	0.136887254752672\\
	3	0.0949320714419968\\
	4	0.072463793512572\\
	5	0.0569194457051267\\
};

\addplot[ybar, bar width=0.1, fill=mycolor2, postaction={
	pattern=crosshatch
}, area legend] table[row sep=crcr] {%
	1	0.00242661217712157\\
	2	0.00140663732614359\\
	3	0.00142595831395413\\
	4	0.00165669166852949\\
	5	0.0019374494474004\\
};
\end{axis}
\end{tikzpicture}%
	\caption{Waiting time distribution with $\theta = -10$ dB for shared (dedicated-EA) represented by solid (hashed) bars.}
	\label{fig:PMF_WT} 
\end{figure}
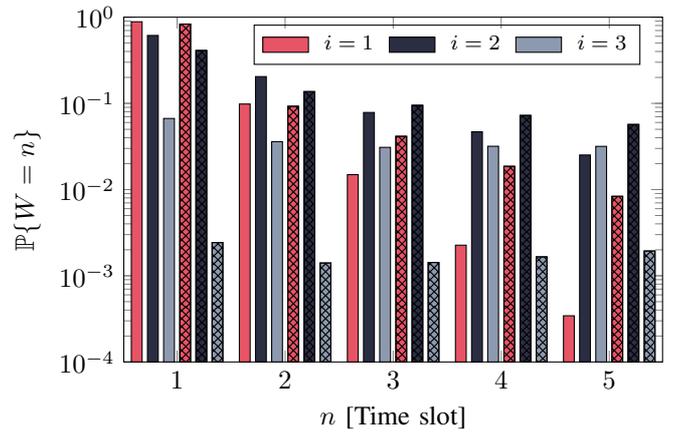
 
In Fig. \ref{fig:kappa_effect}, we investigate the effect of network scalability and devices densification of the first two priority queues under shared and dedication-EA allocation strategies. The considered values of $\kappa$ represent a network with $640$ and 5120 $\text{ device/KM}^2$, given that $\lambda= 10 \text{ BS/KM}^2$ and $C = 64$ channels. First, focusing on the first priority class (c.f. Fig. \ref{fig:kappa_effect}(a)), we observe a slight superiority of the dedicated-EA over the shared strategy over $\theta\in[-20,-6]$ dB. As mentioned earlier, such performance superiority is attributed to the successful packets transmission attempts from the first priority class while benefiting from interference protection from lower priority classes. Such a behavior is also reflected for $\kappa = 8$ within the range  $\theta \in [-20,-14.8]$ dB. Furthermore, as  $\kappa$ increases, a given device experiences stronger interference which degrades the \ac{TSP} and shifts the overflow-region threshold to lower values of $\theta$. For the second priority class (c.f. Fig. \ref{fig:kappa_effect}(b)), we observe the superiority of the shared over the dedicated-EA strategy for the two values of $\kappa$. This is due to the fact that lower priority classes experience head of the queue problem more severely under the dedicated-EA strategy. Finally, since $\kappa$ implicitly considers the number of deployed channels at every BS, such a study can help in deriving the minimum number of channels required to meet a targeted requirement.

\begin{figure} 
	\centering
	\ifCLASSOPTIONdraftcls
	\definecolor{mycolor1}{rgb}{0.49020,0.18039,0.56078}%

\begin{tikzpicture}[scale=0.9]
\begin{groupplot}[group style={
	group name=myplot,
	group size= 2 by 1,  horizontal sep=2cm}, height=1.5in,width=2.7in]

\nextgroupplot[title={{(a)}},
scale only axis,
xmin=-20,
xmax=0,
xlabel style={font=\color{white!15!black}},
xlabel={$\theta$ [dB]},
ymin=0,
ymax=1,
ylabel={TSP},
axis background/.style={fill=white},
xmajorgrids,
ymajorgrids
]

\addplot [color=black, line width=2.0pt]
table[row sep=crcr]{%
	-20	0.970586419768648\\
	-19	0.963157541669851\\
	-18	0.953911591607296\\
	-17	0.942437506889913\\
	-16	0.928249855888286\\
	-15	0.910785853735535\\
	-14	0.889408922520346\\
	-13	0.863422946040482\\
	-12	0.832102530739579\\
	-11	0.794745304002353\\
	-10	0.750751786583538\\
	-9	0.699735545235645\\
	-8	0.641659920211728\\
	-7	0.576986927269488\\
	-6	0.506809956575843\\
	-5	0.432928606613245\\
	-4	0.357818591125136\\
	-3	0.284460253687674\\
	-2	0.216019683166065\\
	-1	0.155421099123583\\
	0	0.104892819535421\\
};
\addplot  [color=mycolor1, line width=2.0pt]
table[row sep=crcr]{%
	-20	0.983838261520072\\
	-19	0.979969152365725\\
	-18	0.974817470518088\\
	-17	0.968350993309098\\
	-16	0.958781616556208\\
	-15	0.947769475778786\\
	-14	0.933705110560617\\
	-13	0.915679918041579\\
	-12	0.892482381072275\\
	-11	0.862485494230015\\
	-10	0.823497086433804\\
	-9	0.772582008495268\\
	-8	0.705919895220539\\
	-7	0.61892784165829\\
	-6	0.507309453272374\\
	-5	0.370587415733334\\
	-4	0.220753097400419\\
	-3	0.096710675205912\\
};

\addplot [color=black, only marks, mark size=3 pt, mark=*, mark options={solid, fill=green}]
table[row sep=crcr]{%
	-3	0.096710675205912\\
};

\addplot [color=black, only marks, mark size=3 pt, mark=*, mark options={solid, fill=green}]
table[row sep=crcr]{%
	-2	0.0409234152756813\\
};

\addplot [color=mycolor1, line width=2.0pt]
table[row sep=crcr]{%
	-2	0.0409234152756813\\
	-1	0.0207200653899588\\
	0	0.00986923635103234\\
};


\addplot [color=black, dotted, line width=2.0pt, mark=square, mark options={solid, black}]
table[row sep=crcr]{%
	-20	0.845252784449903\\
	-19	0.809731891697623\\
	-18	0.767388589736471\\
	-17	0.717593764425436\\
	-16	0.660012224039258\\
	-15	0.594791105665185\\
	-14	0.522768757682629\\
	-13	0.44565903664679\\
	-12	0.36613681412571\\
	-11	0.287731899354133\\
	-10	0.214454713930227\\
	-9	0.150151526976754\\
};

\addplot [color=black, only marks, mark size=3 pt, mark=*, mark options={solid, fill=green}]
table[row sep=crcr]{%
	-9	0.150151526976754\\
};

\addplot [color=black, only marks, mark size=3 pt, mark=*, mark options={solid, fill=green}]
table[row sep=crcr]{%
	-8	0.0977180099270141\\
};

\addplot [color=black, dotted, line width=2.0pt, mark=square, mark options={solid, black}]
table[row sep=crcr]{%
	-8	0.0977180099270141\\
	-7	0.0584328832899488\\
	-6	0.031707562510077\\
	-5	0.015405013156839\\
	-4	0.00660468747371049\\
	-3	0.00245942771830345\\
	-2	0.000781477334106164\\
	-1	0.000207659232590731\\
	0	4.50823097079544e-05\\	
};


\addplot [color=mycolor1, dotted, line width=2.0pt, mark=square, mark options={solid, mycolor1}]
table[row sep=crcr]{%
	-20	0.924564934957761\\
	-19	0.898694805634154\\
	-18	0.862040195879601\\
	-17	0.809250414573718\\
	-16	0.732410196545309\\
	-15	0.621088600802357\\
	-14	0.466224658956815\\
	-13	0.275555604239976\\
	-12	0.106295408085909\\
};

\addplot [color=black, only marks, mark size=3 pt, mark=*, mark options={solid, fill=green}]
table[row sep=crcr]{%
	-12	0.106295408085909\\
};

\addplot [color=black, only marks, mark size=3 pt, mark=*, mark options={solid, fill=green}]
table[row sep=crcr]{%
	-11	0.035915120588961\\
};

\addplot [color=mycolor1, dotted, line width=2.0pt, mark=square, mark options={solid, mycolor1}]
table[row sep=crcr]{%
	-11	0.035915120588961\\
	-10	0.0158620477108088\\
	-9	0.00638128877838559\\
	-8	0.00215904658432682\\
	-7	0.00059729672939644\\
	-6	0.000131012619397548\\
	-5	2.20005375317852e-05\\
	-4	2.71823619352388e-06\\
	-3	2.36225203008053e-07\\
	-2	1.37256762952001e-08\\
	-1	5.03808120066551e-10\\
	0	1.09665938119191e-11\\
};

\nextgroupplot[title={{\smash{(b)}}},
scale only axis,
xmin=-20,
xmax=0,
xlabel style={font=\color{white!15!black}},
xlabel={$\theta$ [dB]},
ymin=0,
ymax=1,
ylabel={TSP},
axis background/.style={fill=white},
xmajorgrids,
ymajorgrids,
legend style={legend columns=5,column sep=0.13cm, legend cell align=left, at={(-0.2,-0.35)},anchor=north},
]

\addplot [color=black, line width=2.0pt]
table[row sep=crcr]{%
	-20	0.872688906555566\\
	-19	0.865673839867\\
	-18	0.856883098367325\\
	-17	0.845881572593045\\
	-16	0.832143998435199\\
	-15	0.815045744104584\\
	-14	0.790108258444602\\
	-13	0.76421882236657\\
	-12	0.732981292300075\\
	-11	0.705393846916411\\
	-10	0.663799271915771\\
	-9	0.615627319384566\\
	-8	0.560880862534403\\
	-7	0.500046702209288\\
	-6	0.434221209186397\\
	-5	0.365178822552751\\
	-4	0.295337991039018\\
	-3	0.227589692709488\\
	-2	0.164983421958394\\
};\addlegendentry{\small Shared-$\kappa = 1$}

\addplot [color=mycolor1, line width=2.0pt]
table[row sep=crcr]{%
	-20	0.874009840974099\\
	-19	0.867834441148821\\
	-18	0.859517940874971\\
	-17	0.849048738246672\\
	-16	0.823751014822288\\
	-15	0.803377103650015\\
	-14	0.777416122700506\\
	-13	0.744304384418031\\
	-12	0.702082803316479\\
	-11	0.6484010877248\\
	-10	0.580743098940922\\
	-9	0.497398737910211\\
	-8	0.400780631556914\\
};\addlegendentry{\small Dedicated-EA-$\kappa = 1$}

\addplot [color=black, dotted, line width=2.0pt, mark=square, mark options={solid, black}]
table[row sep=crcr]{%
	-20	0.746754802617697\\
	-19	0.711475033258048\\
	-18	0.669363800237468\\
	-17	0.619762154783212\\
	-16	0.585325276091175\\
	-15	0.525338033862888\\
	-14	0.459286691480197\\
	-13	0.388834941594406\\
	-12	0.316535525231341\\
	-11	0.245714162933435\\
	-10	0.180098719057133\\
	-9	0.123197472013976\\
	-8	0.0775567072453511\\
	-7	0.044152596785429\\
};\addlegendentry{\small Shared-$\kappa = 8$}

\addplot [color=mycolor1, dotted, line width=2.0pt, mark=square, mark options={solid, mycolor1}]
table[row sep=crcr]{%
	-20	0.701457986263273\\
	-19	0.641246451153093\\
	-18	0.562749300356252\\
	-17	0.464330349923018\\
	-16	0.354593375085902\\
};\addlegendentry{\small Dedicated-EA-$\kappa = 8$}

\addplot [color=black, only marks, mark size=3 pt, mark=*, mark options={solid, fill=green}]
table[row sep=crcr]{%
	-7	0.304220008946337\\
};\addlegendentry{\small Overflow threshold}

\addplot [color=mycolor1, line width=2.0pt]
table[row sep=crcr]{%
	-7	0.304220008946337\\
	-6	0.227498515751832\\
	-5	0.172251711443045\\
	-4	0.126831485410998\\
	-3	0.0700510223387054\\
	-2	0.00933417669547961\\
	-1	0.000196890349453315\\
	0	1.35761304708741e-06\\
};

\addplot [color=black, draw=none,mark size=3 pt, mark=*, mark options={solid, fill=green}, forget plot]
table[row sep=crcr]{%
	-8	0.400780631556914\\
};

\addplot [color=black, draw=none,mark size=3 pt, mark=*, mark options={solid, fill=green}, forget plot]
table[row sep=crcr]{%
	-7	0.044152596785429\\
};

\addplot [color=black, dotted, line width=2.0pt, mark=square, mark options={solid, black}]
table[row sep=crcr]{%
	-6	0.0221976251420623\\
	-5	0.00949534503130587\\
	-4	0.0032077849570385\\
	-3	0.000692971728400592\\
	-2	4.94708689525438e-05\\
	-1	5.85921538059682e-07\\
	0	1.66884179387225e-09\\
};

\addplot [color=black, draw=none,mark size=3 pt, mark=*, mark options={solid, fill=green}, forget plot]
table[row sep=crcr]{%
	-6	0.0221976251420623\\
};

\addplot [color=black, line width=2.0pt]
table[row sep=crcr]{%
	-1	0.110308981686767\\
	0	0.0656557082804053\\
};

\addplot [color=black, draw=none,mark size=3 pt, mark=*, mark options={solid, fill=green}, forget plot]
table[row sep=crcr]{%
	-2	0.164983421958394\\
};

\addplot [color=black, draw=none,mark size=3 pt, mark=*, mark options={solid, fill=green}, forget plot]
table[row sep=crcr]{%
	-1	0.110308981686767\\
};

\addplot [color=mycolor1, dotted, line width=2.0pt, mark=square, mark options={solid, mycolor1}]
table[row sep=crcr]{%
	-15	0.260053274725428\\
	-14	0.197245567173026\\
	-13	0.155092358644666\\
	-12	0.0706383377515074\\
	-11	0.0165140434428533\\
	-10	0.000241865242614196\\
	-9	8.90700423746265e-07\\
	-8	9.83557565099087e-10\\
	-7	2.9044415877958e-13\\
	-6	4.42144517000077e-16\\
	-5	-6.47698417085242e-16\\
	-4	5.1607533329374e-16\\
	-3	-5.05201041933887e-16\\
	-2	5.05981873342391e-16\\
	-1	7.61768277210408e-16\\
	0	-1.03576237439907e-16\\
};

\addplot [color=black, draw=none,mark size=3 pt, mark=*, mark options={solid, fill=green}, forget plot]
table[row sep=crcr]{%
	-15	0.260053274725428\\
};

\addplot [color=black, draw=none,mark size=3 pt, mark=*, mark options={solid, fill=green}, forget plot]
table[row sep=crcr]{%
	-16	0.354593375085902\\
};

\end{groupplot}

\end{tikzpicture}
	\vspace{-0.05in}
	\else
	\definecolor{mycolor1}{RGB}{43,45,66}%
\definecolor{mycolor2}{RGB}{127,201,127}%

\begin{tikzpicture}[scale=0.9]
\begin{groupplot}[group style={
	group name=myplot,
	group size= 1 by 2,  vertical sep=2cm}, height=1.5in,width=0.93*\columnwidth]

\nextgroupplot[title={{(a)}},
scale only axis,
xmin=-20,
xmax=0,
xlabel style={font=\color{white!15!black}},
xlabel={$\theta$ [dB]},
ymin=0,
ymax=1,
ylabel={TSP},
axis background/.style={fill=white},
xmajorgrids,
ymajorgrids
]

\addplot [color=mycolor2, line width=2.0pt]
table[row sep=crcr]{%
	-20	0.970586419768648\\
	-19	0.963157541669851\\
	-18	0.953911591607296\\
	-17	0.942437506889913\\
	-16	0.928249855888286\\
	-15	0.910785853735535\\
	-14	0.889408922520346\\
	-13	0.863422946040482\\
	-12	0.832102530739579\\
	-11	0.794745304002353\\
	-10	0.750751786583538\\
	-9	0.699735545235645\\
	-8	0.641659920211728\\
	-7	0.576986927269488\\
	-6	0.506809956575843\\
	-5	0.432928606613245\\
	-4	0.357818591125136\\
	-3	0.284460253687674\\
	-2	0.216019683166065\\
	-1	0.155421099123583\\
	0	0.104892819535421\\
};
\addplot  [color=mycolor1, line width=2.0pt]
table[row sep=crcr]{%
	-20	0.983838261520072\\
	-19	0.979969152365725\\
	-18	0.974817470518088\\
	-17	0.968350993309098\\
	-16	0.958781616556208\\
	-15	0.947769475778786\\
	-14	0.933705110560617\\
	-13	0.915679918041579\\
	-12	0.892482381072275\\
	-11	0.862485494230015\\
	-10	0.823497086433804\\
	-9	0.772582008495268\\
	-8	0.705919895220539\\
	-7	0.61892784165829\\
	-6	0.507309453272374\\
	-5	0.370587415733334\\
	-4	0.220753097400419\\
	-3	0.096710675205912\\
};

\addplot [color=black, only marks, mark size=3 pt, mark=*, mark options={solid, fill=green}]
table[row sep=crcr]{%
	-3	0.096710675205912\\
};

\addplot [color=black, only marks, mark size=3 pt, mark=*, mark options={solid, fill=green}]
table[row sep=crcr]{%
	-2	0.0409234152756813\\
};

\addplot [color=mycolor1, line width=2.0pt]
table[row sep=crcr]{%
	-2	0.0409234152756813\\
	-1	0.0207200653899588\\
	0	0.00986923635103234\\
};


\addplot [color=mycolor2, dotted, line width=2.0pt, mark=square, mark options={solid, mycolor2}]
table[row sep=crcr]{%
	-20	0.845252784449903\\
	-19	0.809731891697623\\
	-18	0.767388589736471\\
	-17	0.717593764425436\\
	-16	0.660012224039258\\
	-15	0.594791105665185\\
	-14	0.522768757682629\\
	-13	0.44565903664679\\
	-12	0.36613681412571\\
	-11	0.287731899354133\\
	-10	0.214454713930227\\
	-9	0.150151526976754\\
};

\addplot [color=black, only marks, mark size=3 pt, mark=*, mark options={solid, fill=green}]
table[row sep=crcr]{%
	-9	0.150151526976754\\
};

\addplot [color=black, only marks, mark size=3 pt, mark=*, mark options={solid, fill=green}]
table[row sep=crcr]{%
	-8	0.0977180099270141\\
};

\addplot [color=mycolor2, dotted, line width=2.0pt, mark=square, mark options={solid, mycolor2}]
table[row sep=crcr]{%
	-8	0.0977180099270141\\
	-7	0.0584328832899488\\
	-6	0.031707562510077\\
	-5	0.015405013156839\\
	-4	0.00660468747371049\\
	-3	0.00245942771830345\\
	-2	0.000781477334106164\\
	-1	0.000207659232590731\\
	0	4.50823097079544e-05\\	
};


\addplot [color=mycolor1, dotted, line width=2.0pt, mark=square, mark options={solid, mycolor1}]
table[row sep=crcr]{%
	-20	0.924564934957761\\
	-19	0.898694805634154\\
	-18	0.862040195879601\\
	-17	0.809250414573718\\
	-16	0.732410196545309\\
	-15	0.621088600802357\\
	-14	0.466224658956815\\
	-13	0.275555604239976\\
	-12	0.106295408085909\\
};

\addplot [color=black, only marks, mark size=3 pt, mark=*, mark options={solid, fill=green}]
table[row sep=crcr]{%
	-12	0.106295408085909\\
};

\addplot [color=black, only marks, mark size=3 pt, mark=*, mark options={solid, fill=green}]
table[row sep=crcr]{%
	-11	0.035915120588961\\
};

\addplot [color=mycolor1, dotted, line width=2.0pt, mark=square, mark options={solid, mycolor1}]
table[row sep=crcr]{%
	-11	0.035915120588961\\
	-10	0.0158620477108088\\
	-9	0.00638128877838559\\
	-8	0.00215904658432682\\
	-7	0.00059729672939644\\
	-6	0.000131012619397548\\
	-5	2.20005375317852e-05\\
	-4	2.71823619352388e-06\\
	-3	2.36225203008053e-07\\
	-2	1.37256762952001e-08\\
	-1	5.03808120066551e-10\\
	0	1.09665938119191e-11\\
};

\nextgroupplot[title={{\smash{(b)}}},
scale only axis,
xmin=-20,
xmax=0,
xlabel style={font=\color{white!15!black}},
xlabel={$\theta$ [dB]},
ymin=0,
ymax=1,
ylabel={TSP},
axis background/.style={fill=white},
xmajorgrids,
ymajorgrids,
legend style={legend columns=2,column sep=0.13cm, legend cell align=left, at={(0,-0.45)},anchor=west},
]

\addplot [color=mycolor2, line width=2.0pt]
table[row sep=crcr]{%
	-20	0.872688906555566\\
	-19	0.865673839867\\
	-18	0.856883098367325\\
	-17	0.845881572593045\\
	-16	0.832143998435199\\
	-15	0.815045744104584\\
	-14	0.790108258444602\\
	-13	0.76421882236657\\
	-12	0.732981292300075\\
	-11	0.705393846916411\\
	-10	0.663799271915771\\
	-9	0.615627319384566\\
	-8	0.560880862534403\\
	-7	0.500046702209288\\
	-6	0.434221209186397\\
	-5	0.365178822552751\\
	-4	0.295337991039018\\
	-3	0.227589692709488\\
	-2	0.164983421958394\\
};\addlegendentry{\small Shared-$\kappa = 1$}

\addplot [color=mycolor1, line width=2.0pt]
table[row sep=crcr]{%
	-20	0.874009840974099\\
	-19	0.867834441148821\\
	-18	0.859517940874971\\
	-17	0.849048738246672\\
	-16	0.823751014822288\\
	-15	0.803377103650015\\
	-14	0.777416122700506\\
	-13	0.744304384418031\\
	-12	0.702082803316479\\
	-11	0.6484010877248\\
	-10	0.580743098940922\\
	-9	0.497398737910211\\
	-8	0.400780631556914\\
};\addlegendentry{\small Dedicated-EA-$\kappa = 1$}

\addplot [color=mycolor2, dotted, line width=2.0pt, mark=square, mark options={solid, mycolor2}]
table[row sep=crcr]{%
	-20	0.746754802617697\\
	-19	0.711475033258048\\
	-18	0.669363800237468\\
	-17	0.619762154783212\\
	-16	0.585325276091175\\
	-15	0.525338033862888\\
	-14	0.459286691480197\\
	-13	0.388834941594406\\
	-12	0.316535525231341\\
	-11	0.245714162933435\\
	-10	0.180098719057133\\
	-9	0.123197472013976\\
	-8	0.0775567072453511\\
	-7	0.044152596785429\\
};\addlegendentry{\small Shared-$\kappa = 8$}

\addplot [color=mycolor1, dotted, line width=2.0pt, mark=square, mark options={solid, mycolor1}]
table[row sep=crcr]{%
	-20	0.701457986263273\\
	-19	0.641246451153093\\
	-18	0.562749300356252\\
	-17	0.464330349923018\\
	-16	0.354593375085902\\
};\addlegendentry{\small Dedicated-EA-$\kappa = 8$}

\addplot [color=black, only marks, mark size=3 pt, mark=*, mark options={solid, fill=green}]
table[row sep=crcr]{%
	-7	0.304220008946337\\
};\addlegendentry{\small Overflow threshold}

\addplot [color=mycolor1, line width=2.0pt]
table[row sep=crcr]{%
	-7	0.304220008946337\\
	-6	0.227498515751832\\
	-5	0.172251711443045\\
	-4	0.126831485410998\\
	-3	0.0700510223387054\\
	-2	0.00933417669547961\\
	-1	0.000196890349453315\\
	0	1.35761304708741e-06\\
};

\addplot [color=black, draw=none,mark size=3 pt, mark=*, mark options={solid, fill=green}, forget plot]
table[row sep=crcr]{%
	-8	0.400780631556914\\
};

\addplot [color=black, draw=none,mark size=3 pt, mark=*, mark options={solid, fill=green}, forget plot]
table[row sep=crcr]{%
	-7	0.044152596785429\\
};

\addplot [color=mycolor2, dotted, line width=2.0pt, mark=square, mark options={solid, mycolor2}]
table[row sep=crcr]{%
	-6	0.0221976251420623\\
	-5	0.00949534503130587\\
	-4	0.0032077849570385\\
	-3	0.000692971728400592\\
	-2	4.94708689525438e-05\\
	-1	5.85921538059682e-07\\
	0	1.66884179387225e-09\\
};

\addplot [color=black, draw=none,mark size=3 pt, mark=*, mark options={solid, fill=green}, forget plot]
table[row sep=crcr]{%
	-6	0.0221976251420623\\
};

\addplot [color=mycolor2, line width=2.0pt]
table[row sep=crcr]{%
	-1	0.110308981686767\\
	0	0.0656557082804053\\
};

\addplot [color=black, draw=none,mark size=3 pt, mark=*, mark options={solid, fill=green}, forget plot]
table[row sep=crcr]{%
	-2	0.164983421958394\\
};

\addplot [color=black, draw=none,mark size=3 pt, mark=*, mark options={solid, fill=green}, forget plot]
table[row sep=crcr]{%
	-1	0.110308981686767\\
};

\addplot [color=mycolor1, dotted, line width=2.0pt, mark=square, mark options={solid, mycolor1}]
table[row sep=crcr]{%
	-15	0.260053274725428\\
	-14	0.197245567173026\\
	-13	0.155092358644666\\
	-12	0.0706383377515074\\
	-11	0.0165140434428533\\
	-10	0.000241865242614196\\
	-9	8.90700423746265e-07\\
	-8	9.83557565099087e-10\\
	-7	2.9044415877958e-13\\
	-6	4.42144517000077e-16\\
	-5	-6.47698417085242e-16\\
	-4	5.1607533329374e-16\\
	-3	-5.05201041933887e-16\\
	-2	5.05981873342391e-16\\
	-1	7.61768277210408e-16\\
	0	-1.03576237439907e-16\\
};

\addplot [color=black, draw=none,mark size=3 pt, mark=*, mark options={solid, fill=green}, forget plot]
table[row sep=crcr]{%
	-15	0.260053274725428\\
};

\addplot [color=black, draw=none,mark size=3 pt, mark=*, mark options={solid, fill=green}, forget plot]
table[row sep=crcr]{%
	-16	0.354593375085902\\
};

\end{groupplot}

\end{tikzpicture}
	\fi	
	\caption{Effect of devices densification on the  (a) first (b) second priority class.}
	\label{fig:kappa_effect} 
\end{figure}
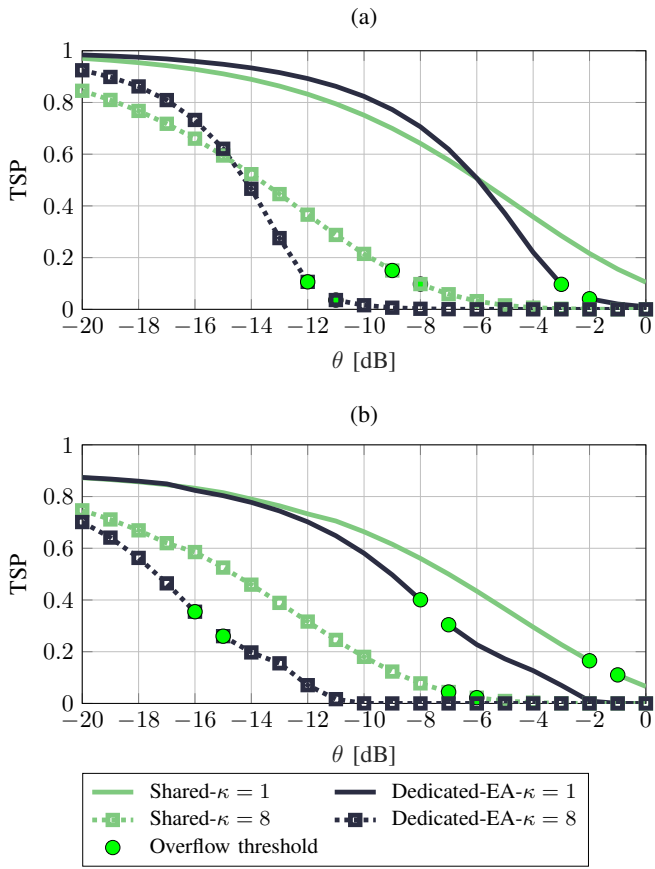

\begin{figure*}
	\centering
	\ifCLASSOPTIONdraftcls
	\definecolor{mycolor1}{rgb}{0.00000,1.00000,1.00000}%
\definecolor{mycolor2}{rgb}{0.00000,0.44700,0.74100}%
\definecolor{mycolor3}{rgb}{0.85000,0.32500,0.09800}%
\definecolor{mycolor4}{rgb}{0.49020,0.18039,0.56078}%
\definecolor{mycolor5}{rgb}{1.00000,1.00000,0.00000}%

\definecolor{blue1}{rgb}{0.0, 0.33, 0.71}
\definecolor{green1}{rgb}{0.6, 1.0, 0.6}
\definecolor{red1}{rgb}{0.9, 0.4, 0.38}
\definecolor{red2}{rgb}{0.9, 0.13, 0.13}

\begin{tikzpicture}[scale=0.9]
\begin{groupplot}[group style={
	group name=myplot,
	group size= 3 by 1, , horizontal sep=1.7cm}, height=1.5in,width=1.65in]

\nextgroupplot[title={{(a)}},
scale only axis,
xmin=0.05,
xmax=1,
xlabel={$\alpha_2$},
ymin=0,
ymax=0.8,
ylabel={$\alpha_1$},
ytick={0,0.1,0.2,0.3,0.4,0.5,0.6,0.7,0.8}, 
grid=both]

\addplot[fill=blue, draw=black] 
table[x = x, y=y]{simulation_results/figures/Pareto_regions/figa_shared_20.txt}\closedcycle;

\addplot[color=black, line width=1.5pt] 
table[x = x, y=y]{simulation_results/figures/Pareto_regions/figa_shared_20.txt};

\addplot[fill=blue1, draw=black] 
table[x = x, y=y]{simulation_results/figures/Pareto_regions/figa_dedicated_20.txt}\closedcycle;

\addplot[color=black, dotted, line width=2pt] 
table[x = x, y=y]{simulation_results/figures/Pareto_regions/figa_dedicated_20.txt};


\addplot[fill=green,  draw=black] 
table[x = x, y=y]{simulation_results/figures/Pareto_regions/figa_shared_10.txt}\closedcycle;

\addplot [color=black, line width=1.5pt]
table[x = x, y=y]{simulation_results/figures/Pareto_regions/figa_shared_10.txt};

\addplot[fill=green1, draw=black]
table[x = x, y=y]{simulation_results/figures/Pareto_regions/figa_dedicated_10.txt}\closedcycle;

\addplot [color=black, dotted, line width=2pt]
table[x = x, y=y]{simulation_results/figures/Pareto_regions/figa_dedicated_10.txt};

\addplot[fill=red2, draw=black] 
table[x = x, y=y]{simulation_results/figures/Pareto_regions/figa_shared_5.txt}\closedcycle;

\addplot[color=black, line width=1.5pt] 
table[x = x, y=y]{simulation_results/figures/Pareto_regions/figa_shared_5.txt};

\addplot[fill=red1,  draw=black] 
table[x = x, y=y]{simulation_results/figures/Pareto_regions/figa_dedicated_5.txt}\closedcycle;

\addplot [color=black, dotted, line width=2pt]
table[x = x, y=y]{simulation_results/figures/Pareto_regions/figa_dedicated_5.txt};

\node[below right, align=left]
at (rel axis cs:0.23,0.67) {\small $\theta$ = (\textcolor{blue}{-20}, \textcolor{green}{-10}, \textcolor{red}{-5}) dB};

\draw[->, line width=0.5mm](rel axis cs:0.4,0.56) -- (axis cs:0.15,0.08);

\nextgroupplot[title={{\smash{(b)}}},
scale only axis,
xmin=0.01,
xmax=0.5,
xlabel={$\alpha_2$},
ymin=0.1,
ymax=0.6,
ylabel={$\alpha_1$},
grid=both,
ytick={0,0.1,0.2,0.3,0.4,0.5,0.6}, 
]

\addplot[fill=blue, draw=black] 
table[x = x, y=y]{simulation_results/figures/Pareto_regions/figb_shared_0.1.txt}\closedcycle;

\addplot [color=black, line width=1.5pt]
table[x = x, y=y]{simulation_results/figures/Pareto_regions/figb_shared_0.1.txt};
\addplot[fill=blue1, draw=black] 
table[x = x, y=y]{simulation_results/figures/Pareto_regions/figb_dedicated_0.1.txt}\closedcycle;

\addplot[color=black, dotted, line width=2pt]
table[x = x, y=y]{simulation_results/figures/Pareto_regions/figb_dedicated_0.1.txt};


\addplot[fill=red, draw=black] 
table[x = x, y=y]{simulation_results/figures/Pareto_regions/figb_shared_0.3.txt}\closedcycle;
\addplot [color=black, line width=1.5pt]
table[x = x, y=y]{simulation_results/figures/Pareto_regions/figb_shared_0.3.txt};

\addplot[fill=red1, draw=black] 
table[x = x, y=y]{simulation_results/figures/Pareto_regions/figb_dedicated_0.3.txt}\closedcycle;
\addplot [color=black,dotted, line width=2pt]
table[x = x, y=y]{simulation_results/figures/Pareto_regions/figb_dedicated_0.3.txt};

\node[below right, align=left]
at (rel axis cs:0.2,0.7) {\small $\alpha_3 = (\textcolor{blue}{0.1}, \textcolor{red}{0.3} )$};
\draw[->, line width=0.5mm](rel axis cs:0.43,0.56) -- (axis cs:0.05,0.12);

\nextgroupplot[title={{\smash{(c)}}},
scale only axis,
xmin=-110,
xmax=-70,
xlabel={$\rho\text{ [dBm]}$},
ymin=-19,
ymax=4,
y coord trafo/.code=\pgfmathparse{#1+20},
y coord inv trafo/.code=\pgfmathparse{#1-20},
ylabel={$\theta\text{ [dB]}$},
grid=both]

\addplot[fill=blue1, draw=black] 
table[x = x, y=y]{simulation_results/figures/Pareto_regions/figc_shared_1.txt}\closedcycle;
\addplot [color=black,dotted, line width=1.5pt]
table[x = x, y=y]{simulation_results/figures/Pareto_regions/figc_shared_1.txt};



\addplot[fill=green1, draw=black] 
table[x = x, y=y]{simulation_results/figures/Pareto_regions/figc_shared_4.txt}\closedcycle;
\addplot [color=black,dotted, line width=1.5pt]
table[x = x, y=y]{simulation_results/figures/Pareto_regions/figc_shared_4.txt};



\addplot[fill=red1,  draw=black] 
table[x = x, y=y]{simulation_results/figures/Pareto_regions/figc_shared_8.txt}\closedcycle;
\addplot [color=black,dotted, line width=1.5pt]
table[x = x, y=y]{simulation_results/figures/Pareto_regions/figc_shared_8.txt};


\node[below right, align=left]
at (rel axis cs:0.01,0.9) {\small $\kappa$ = (\textcolor{blue}{1}, \textcolor{green}{4}, \textcolor{red}{8})};
\draw[->, line width=0.5mm](rel axis cs:0.2,0.8) -- (rel axis cs:0.6,0.3);

\end{groupplot}

\end{tikzpicture}
	\vspace{-0.15in}
	\else
	\definecolor{mycolor1}{rgb}{0.00000,1.00000,1.00000}%
\definecolor{mycolor2}{rgb}{0.00000,0.44700,0.74100}%
\definecolor{mycolor3}{rgb}{0.85000,0.32500,0.09800}%
\definecolor{mycolor4}{rgb}{0.49020,0.18039,0.56078}%
\definecolor{mycolor5}{rgb}{1.00000,1.00000,0.00000}%

\definecolor{blue1}{rgb}{0.0, 0.33, 0.71}
\definecolor{green1}{rgb}{0.6, 1.0, 0.6}
\definecolor{red1}{rgb}{0.9, 0.4, 0.38}
\definecolor{red2}{rgb}{0.9, 0.13, 0.13}

\begin{tikzpicture}[scale=0.9]
\begin{groupplot}[group style={
	group name=myplot,
	group size= 3 by 1, , horizontal sep=1.7cm}, height=2in,width=1.9in]

\nextgroupplot[title={{(a)}},
scale only axis,
xmin=0.05,
xmax=1,
xlabel={$\alpha_2$},
ymin=0,
ymax=0.8,
ylabel={$\alpha_1$},
ytick={0,0.2,0.4,0.6,0.8}, 
grid=both]

\addplot[fill=blue, draw=black] 
table[x = x, y=y]{simulation_results/figures/Pareto_regions/figa_shared_20.txt}\closedcycle;

\addplot[color=black, line width=1.5pt] 
table[x = x, y=y]{simulation_results/figures/Pareto_regions/figa_shared_20.txt};

\addplot[fill=blue1, draw=black] 
table[x = x, y=y]{simulation_results/figures/Pareto_regions/figa_dedicated_20.txt}\closedcycle;

\addplot[color=black, dotted, line width=2pt] 
table[x = x, y=y]{simulation_results/figures/Pareto_regions/figa_dedicated_20.txt};


\addplot[fill=green,  draw=black] 
table[x = x, y=y]{simulation_results/figures/Pareto_regions/figa_shared_10.txt}\closedcycle;

\addplot [color=black, line width=1.5pt]
table[x = x, y=y]{simulation_results/figures/Pareto_regions/figa_shared_10.txt};

\addplot[fill=green1, draw=black]
table[x = x, y=y]{simulation_results/figures/Pareto_regions/figa_dedicated_10.txt}\closedcycle;

\addplot [color=black, dotted, line width=2pt]
table[x = x, y=y]{simulation_results/figures/Pareto_regions/figa_dedicated_10.txt};

\addplot[fill=red2, draw=black] 
table[x = x, y=y]{simulation_results/figures/Pareto_regions/figa_shared_5.txt}\closedcycle;

\addplot[color=black, line width=1.5pt] 
table[x = x, y=y]{simulation_results/figures/Pareto_regions/figa_shared_5.txt};

\addplot[fill=red1,  draw=black] 
table[x = x, y=y]{simulation_results/figures/Pareto_regions/figa_dedicated_5.txt}\closedcycle;

\addplot [color=black, dotted, line width=2pt]
table[x = x, y=y]{simulation_results/figures/Pareto_regions/figa_dedicated_5.txt};

\node[below right, align=left]
at (rel axis cs:0.23,0.67) {\small $\theta$ = (\textcolor{blue}{-20}, \textcolor{green}{-10}, \textcolor{red}{-5}) dB};

\draw[->, line width=0.5mm](rel axis cs:0.4,0.56) -- (axis cs:0.15,0.08);

\nextgroupplot[title={{\smash{(b)}}},
scale only axis,
xmin=0.01,
xmax=0.5,
xlabel={$\alpha_2$},
ymin=0.1,
ymax=0.6,
ylabel={$\alpha_1$},
grid=both,
ytick={0,0.1,0.2,0.3,0.4,0.5,0.6}, 
]

\addplot[fill=blue, draw=black] 
table[x = x, y=y]{simulation_results/figures/Pareto_regions/figb_shared_0.1.txt}\closedcycle;

\addplot [color=black, line width=1.5pt]
table[x = x, y=y]{simulation_results/figures/Pareto_regions/figb_shared_0.1.txt};
\addplot[fill=blue1, draw=black] 
table[x = x, y=y]{simulation_results/figures/Pareto_regions/figb_dedicated_0.1.txt}\closedcycle;

\addplot[color=black, dotted, line width=2pt]
table[x = x, y=y]{simulation_results/figures/Pareto_regions/figb_dedicated_0.1.txt};


\addplot[fill=red, draw=black] 
table[x = x, y=y]{simulation_results/figures/Pareto_regions/figb_shared_0.3.txt}\closedcycle;
\addplot [color=black, line width=1.5pt]
table[x = x, y=y]{simulation_results/figures/Pareto_regions/figb_shared_0.3.txt};

\addplot[fill=red1, draw=black] 
table[x = x, y=y]{simulation_results/figures/Pareto_regions/figb_dedicated_0.3.txt}\closedcycle;
\addplot [color=black,dotted, line width=2pt]
table[x = x, y=y]{simulation_results/figures/Pareto_regions/figb_dedicated_0.3.txt};

\node[below right, align=left]
at (rel axis cs:0.2,0.7) {\small $\alpha_3 = (\textcolor{blue}{0.1}, \textcolor{red}{0.3} )$};
\draw[->, line width=0.5mm](rel axis cs:0.43,0.56) -- (axis cs:0.05,0.12);

\nextgroupplot[title={{\smash{(c)}}},
scale only axis,
xmin=-110,
xmax=-70,
xlabel={$\rho\text{ [dBm]}$},
ymin=-19,
ymax=5,
y coord trafo/.code=\pgfmathparse{#1+20},
y coord inv trafo/.code=\pgfmathparse{#1-20},
ylabel={$\theta\text{ [dB]}$},
ytick={-20,-15,-10,-5,0,5},
grid=both]

\addplot[fill=blue1, draw=black] 
table[x = x, y=y]{simulation_results/figures/Pareto_regions/figc_shared_1.txt}\closedcycle;
\addplot [color=black,dotted, line width=1.5pt]
table[x = x, y=y]{simulation_results/figures/Pareto_regions/figc_shared_1.txt};



\addplot[fill=green1, draw=black] 
table[x = x, y=y]{simulation_results/figures/Pareto_regions/figc_shared_4.txt}\closedcycle;
\addplot [color=black,dotted, line width=1.5pt]
table[x = x, y=y]{simulation_results/figures/Pareto_regions/figc_shared_4.txt};



\addplot[fill=red1,  draw=black] 
table[x = x, y=y]{simulation_results/figures/Pareto_regions/figc_shared_8.txt}\closedcycle;
\addplot [color=black,dotted, line width=1.5pt]
table[x = x, y=y]{simulation_results/figures/Pareto_regions/figc_shared_8.txt};


\node[below right, align=left]
at (rel axis cs:0.01,0.9) {\small $\kappa$ = (\textcolor{blue}{1}, \textcolor{green}{4}, \textcolor{red}{8})};
\draw[->, line width=0.5mm](rel axis cs:0.2,0.8) -- (rel axis cs:0.6,0.3);

\end{groupplot}

\end{tikzpicture}
	\fi
	\caption{Non-overflow frontiers of the shared (dedicated-EA) strategy represented by solid (dashed) lines.}
	\label{fig:Pareto_regions} 
\end{figure*}

To showcase the network's stability regions, Fig. \ref{fig:Pareto_regions} presents the non-overflow region frontiers under shared and dedicated-EA strategies for different system parameters. Such regions ensure queues operating below the overflow threshold, which is represented via the filled area under the curves. The dark (solid lines) and light shaded (dashed lines) represent the shared and dedicated-EA allocation strategies, respectively. First, Fig. \ref{fig:Pareto_regions}(a) shows the relation between the arrival probability of the two highest priority classes ($\alpha_1$ and $\alpha_2$) and the \ac{SINR} threshold $\theta$. As explained in Fig. \ref{fig:TSP}, larger values of $\theta$ leads to higher aggregate network interference, thus, supporting lower traffic arrivals to operate within the non-overflow regions. We observe that for low values of $\theta$, the gap between the shared and dedicated-EA allocation diminishes, since the devices are able to empty their queues nearly easily even under strong mutual interference. As $\theta$ increases, the shared strategy outperforms the dedicated-EA, since more channels are available for each device for the former strategy. Similarly,  Fig. \ref{fig:Pareto_regions}(b) highlights the effect of increasing the third priority packets arrival probability, where the overflow region decreases with larger arrival probabilities. Such a figure can provide interesting insights when studying the relation between different classes of traffic in order to ensure a stable network. The performance comparison between the shared and the dedicated-EA strategies follows Fig. \ref{fig:Pareto_regions}(a). Finally, Fig. \ref{fig:Pareto_regions}(c) focuses on the relation between $\theta$, uplink power control threshold $\rho$ and  $\kappa$. For a given $\kappa$, we can expect that as the uplink transmission can operate under higher thresholds (i.e., higher probabilities), the feasible set of $\theta$ ensuring non-overflow operation increases till saturation is reached. This follows from the system transitioning from the noise limited to the interference limited scenario, which is governed by the value of $\sigma^2$. On the other hand, as $\kappa$ increases, the non-overflow region diminishes, which is due to the increased interference within the network. It is important to notice that the shared and dedicated-EA strategies provide similar ($\theta, \rho$) frontiers when considering the network's propagation parameters, since the main dynamics affecting this frontier is radio-related and is oblivious ot the adopted resource allocation strategy.

\thispagestyle{empty}
\section{Conclusion}\label{sec:Conclusion}

This paper presents a tractable and scalable spatiotemporal mathematical framework for large scale uplink prioritized multi-stream traffic in \ac{IoT} networks. The network is modeled via network of interacting vacation queue, where at the spatial macroscopic scale, interactions occur between different devices due to the mutual interference. At the spacial microscopic scale, interactions among different priority packets occur as the uplink channel can only be utilized by the highest priority packets at the device and is not available to any of the lower priority packets, which is denoted as service vacation. The developed spatiotemporal model is used to assess and compare three priority aware channel allocation strategies; namely dedicated-equal allocation, dedicated-weighted allocation and shared allocation strategy. Numerical evaluations showcase the performance of each priority class in terms of transmission success probability, average queue length, average-delay, delay distribution, and peak age of information. Furthermore, a multi-class priority agnostic scheme is used to benchmark the gains and costs of traffic prioritization on the different priority classes in terms of transmission success probability and average packet delay. The stability of the IoT network is assessed via the Pareto-frontiers of the non-overflow regions. Finally, results indicate the superiority of the shared channel allocation strategy over the dedicated ones, since the former offers higher pool of channels, enabling interference diversification. 


\thispagestyle{empty}
\bibliographystyle{./lib/IEEEtran.cls}
\bibliography{./literature/Literature_Local}

\appendices

\section{Proof of Lemma 1}\label{se:Appendix_A}

In order to fully characterize the vacation period for a given priority class, the aggregate busy periods of higher priority queues need to be characterized. For the highest priority class (i.e., $i=1$), its transition matrix $P_1$ is that of a simple birth-death process. Consequently, its busy period, denoted as $\mathbf{V}_1$ is represented via the following absorbing Markov chain
\begin{equation}
\label{eq:P_1st}
\mathbf{V}_1=\left[ \begin{array}{llll} 
\bar{\alpha_1}\bar{p}_1 + \alpha_1 p & \alpha_1\bar{p}_1 &  &  \\
\bar{\alpha_1}p & \bar{\alpha_1}\bar{p}_1 + \alpha_1 p & \alpha_1\bar{p}_1 & \\
& & \ddots & \ddots  \\
& & \bar{\alpha_1}p & \bar{\alpha_1}\bar{p}_1 + \alpha_1
\end{array}\right].
\end{equation}
Let $\tilde{\mathbf{v}}_1 =  [\bar{\alpha_1}p_1 \; \mathbf{0}_{k_1}] \in \mathbb{R}^{k_1 -1 \times 1}$ denotes the absorption vector. Through $\mathbf{V}_1$ and $\tilde{\mathbf{v}}_1$, one can fully characterize the transitions when a first priority packet arrives as well as its successful departure (i.e., absorption). The second priority queue can be modeled as Geo/PH/1 queue, where the PH type distribution models the busy period of the first priority queue. Consider then the case of serving second priority packets, if a first priority packet arrives, an initialization vector $\mathbf{v}_1 = [ 1 \; \bm{0}_{k_1}]$ is required to characterize the states distribution and the probability of their occurrence $\chi_1$. Since the queue is initialized as empty, $\chi_1= \alpha_1$. The analysis for a generic $i$-th priority class is extended and with some mathematical adaptations, the lemma is finalized.

\section{Proof of Theorem 1}\label{se:Appendix_B}

For the dedicated access scheme, a packet belonging to the $i$-th priority queue will only experience aggregate interference from packets belonging to the same priority class.This packet will be granted transmission only if all the higher priority queues are empty. The portion of interfering device for the $i$-th queue at the \ac{BS} is $\mu\zeta_i$, where $\zeta_i = \sum_{z_i=1}^{k_{i}} \mathbb{P}\{(0,0,\cdots, 0,z_i)\}$ is the joint probability of having idle $i$-1 priority queues and non-idle $i$-th priority queue. Additionally, the adopted grant-free transmission scheme among the devices imposes a differentiation between the experienced interference into intra-cell and inter-cell interference, thus (\ref{eq:TSP_2}) is written as 
\begin{equation}\label{eq:laplace_0}
p = \text{exp}\Big \{ -\frac{\sigma^2\theta}{\rho} \Big\} \mathcal{L}_{I_{\text{out}}}\Big (\frac{\theta}{\rho} \Big) \mathcal{L}_{I_{\text{in}}}\Big (\frac{\theta}{\rho}\Big).
\end{equation} 

Since full channel inversion power control with threshold $\rho$ is employed, two main results hold: (i) received power from the devices at a given \ac{BS} equals $\rho$ (ii) interference power from the neighboring devices is strictly lower than $\rho$. Following \cite[Theorem 1]{ElSawy2014}, the LT of the aggregate inter-cell interference at the serving \ac{BS} for an $i$-th priority packet is 
\begin{equation}\label{eq:laplace_1}
\mathcal{L}_{I_{\text {out}}}(s) \approx \exp \Big(-2 \pi\mu\zeta_i s^{\frac{2}{\eta}} \mathbb{E}_{P}\left[P^{\frac{2}{\eta}}\right] \int_{(s \rho)^{\frac{-1}{\eta}}}^{\infty} \frac{y}{y^{\eta}+1} d y\Big),
\end{equation}
where the approximation is due to the assumed independent transmission powers of the devices (Approximation~\ref{approx1}(i)). The LT of the inter-cell interference can be evaluated as \cite[Lemma 1]{Gharbieh2017}
\begin{equation}\label{eq:laplace_2}
\mathcal{L}_{I_{\text {in}}}(s) \approx \mathbb{P}\{\mathcal{N}=0\}+\sum_{n=1}^{\infty} \frac{\mu^{n}(\lambda c)^{c}\mathrm{\Gamma}(n+c)}{(1+s \rho)^{n}\mu+\lambda c)^{n+c}\mathrm{\Gamma}(n+1) \mathrm{\Gamma}(c)},
\end{equation}
where $\mathrm{\Gamma}(\cdot)$ is the gamma function, $\mathcal{N}$ is a random variable representing the number of neighbors and $c=3.575$ is a constant defined to approximate Voronoi cell's PDF in $\mathbb{R}^2$. Plugging (\ref{eq:laplace_1}) and (\ref{eq:laplace_2}) into (\ref{eq:laplace_0}) and following \cite[Lemma 1]{Gharbieh2017}, the theorem is derived.

\end{document}